\newtheorem{observation}[theorem]{Observation}
\newtheorem{hypothesis}[theorem]{Hypothesis}
\newcommand{\mc}[1]{\mathcal{#1}}
\newcommand{\hide}[1]{}
\newcommand{\sgraph}[1]{\ensuremath{\mathcal{SG}
\ifthenelse{\equal{#1}{}}{}{(#1)}
}}
\newcommand{\cgraph}[1]{\ensuremath{\mathcal{CG}
\ifthenelse{\equal{#1}{}}{}{(#1)}
}}
\newcommand{\cpath}[2]{\ensuremath{P_{#1}
\ifthenelse{\equal{#2}{}}{}{(#2)}
}}
\newcommand{\problemmacro}[1]{\texorpdfstring{\textsc{#1}}{#1}\xspace}
\newcommand{\Pprecpmtn}{\textsc{P}$|$\textsf{prec,\xspace pmtn,}\xspace$p_j=1\xspace|$\textsc{C}${}_{max}$\xspace}
\newcommand{\Pprecpmtnnn}{\textsc{P}$|$\textsf{prec,\xspace pmtn}\xspace$\xspace|$\textsc{C}${}_{max}$\xspace}
\newcommand{\Pprec}{\textsc{P}$|$\textsf{prec}\xspace$|$\textsc{C}${}_{max}$\xspace}
\newcommand{\Qprec}{\textsc{Q}$|$\textsf{prec}\xspace$|$\textsc{C}${}_{max}$\xspace}
\newcommand{\Oprec}{\textsc{1}$|$\textsf{prec}\xspace$|\sum_{j}w_j$\textsc{C}${}_{j}$\xspace}
\newcommand{\uniquegames}{\problemmacro{unique games}}
\newcommand{\dtooneg}{{\textsf{d-to-1} \textsc{Games}}\xspace}
\newcommand{\dtoone}{{\textsf{d-to-1}}\xspace}
\begin{document}
\title{Towards Tight Lower Bounds for Scheduling Problems}
\author{Abbas Bazzi\inst{1} \and Ashkan Norouzi-Fard\inst{2}}
\institute{School of Computer and Communication Sciences, EPFL. \email{abbas.bazzi@epfl.ch}  \and School of Computer and Communication Sciences, EPFL. \email{ashkan.norouzifard@epfl.ch}  }



\setcounter{page}{0}
\maketitle
\thispagestyle{empty}
\begin{abstract}
We show a close connection between structural hardness for $k$-partite graphs and tight inapproximability results for scheduling problems with precedence constraints. Assuming a natural but nontrivial generalisation of the bipartite structural hardness result of~\cite{BK09}, we obtain a hardness of $2-\epsilon$ for the problem of minimising the makespan for scheduling precedence-constrained jobs with preemption on identical parallel machines. This matches the best approximation guarantee for this problem~\cite{Gra66,GR08}. Assuming the same hypothesis, we also obtain a super constant inapproximability result for the problem of scheduling precedence-constrained jobs on related parallel machines, making progress towards settling an open question in both lists of ten open questions by Williamson and Shmoys~\cite{WS11}, and by Schuurman and Woeginger~\cite{SW99}.

The study of structural hardness of $k$-partite graphs is of independent interest, as it captures the intrinsic hardness for a large family of scheduling problems. Other than the ones already mentioned, this generalisation also implies tight inapproximability to the problem of minimising the weighted completion time for  precedence-constrained jobs on a single machine, and the problem of minimising the makespan of precedence-constrained jobs on identical parallel machine, and hence unifying the results of Bansal and Khot\cite{BK09} and Svensson~\cite{Ola11}, respectively.
\end{abstract}

{\small \textbf{Keywords:}
hardness of approximation, scheduling problems, unique game conjecture
}

\setcounter{page}{1}

\newcommand{\Pprecpmtnn}{\textsc{P}$|$\textsf{prec,\xspace pmtn}$\xspace\xspace|$\textsc{C}${}_{max}$\xspace}
\section{Introduction}
The study of scheduling problems is motivated by the natural need to efficiently allocate limited resources over the course of time. While some scheduling problems can be solved to optimality in polynomial time, others turn out to be NP-hard. This difference in computational complexity can be altered by many factors, from the machines model that we adopt, to the requirements imposed on the jobs, as well as the optimality criterion of a feasible schedule. For instance, if we are interested in minimising the completion time of the latest job in a schedule (known as the maximum makespan), then the scheduling problem is NP-hard to approximate within a factor of $3/2-\epsilon$, for any $\epsilon>0$, if the machines are \emph{unrelated}, whereas it admits a Polynomial Time Approximation Scheme (PTAS) for the case of \emph{identical parallel} machines~\cite{HS87}. Adopting a model in between the two, in which the machines run at different speeds, but do so uniformly for all jobs (known as \emph{uniform parallel} machines), also leads to a PTAS for the scheduling problem~\cite{HS88}.

Although this somehow suggests a similarity in the complexity of scheduling problems between identical parallel machines and uniform parallel machines, our hopes for comparably performing algorithms seem to be shattered as soon as we add precedence requirements among the jobs. 
On the one hand, we know how to obtain a $2$-approximation algorithm for the problem where the parallel machines are identical~\cite{Gra66,GR08} (denoted as \Pprec in the language of~\cite{GLLR79}), whereas on the other hand the best approximation algorithm known to date for the uniform parallel machines case (denoted as \Qprec), gives a $\log(m)$-approximation guarantee~\cite{CS99,CB01}, $m$ being the number of machines. In fact obtaining a constant factor approximation algorithm for the latter, or ruling out any such result is a major open problem in the area of scheduling algorithms. Perhaps as a testament to that, is the fact that it is listed by Williamson and Shmoys~\cite{WS11} as Open Problem 8, and by Schuurman and Woeginger~\cite{SW99} as Open Problem 1.

Moreover, our understanding of scheduling problems even on the same model of machines does not seem to be complete either. On the positive side, it is easy to see that the maximum makespan of  any feasible schedule for \Pprec is at least $\max \left\{ L, n/m\right\}$, where $L$ is the length of the longest chain of precedence constraints in our instance, and $n$ and $m$ are the number of jobs and machines respectively. The same lower bound still holds when we allow preemption, i.e., the scheduling problem \Pprecpmtnn. 
Given that both 2-approximation algorithms of ~\cite{Gra66} and~\cite{GR08} rely in their analysis on the aforementioned lower bound, then they also yield a 2-approximation algorithm for \Pprecpmtnn. However, on the negative side, our understanding for \Pprecpmtnn is much less complete. For instance,  we know that it is NP-hard to approximate \Pprec within any constant factor strictly better than $4/3$~\cite{LK79}, and assuming  (a variant of) the \uniquegames Conjecture, the latter lower bound is improved to 2~\cite{Ola11}. However for \Pprecpmtnn, only NP-hardness is known. It is important to note here that the hard instances yielding the $(2-\epsilon)$ hardness for \Pprec are  easy instances for \Pprecpmtnn. Informally speaking, the hard instances for \Pprec can be thought of as $k$-partite graphs, where each partition has $n+1$ vertices that correspond to $n+1$ jobs, and the edges from a layer to the layer above it emulate the precedence constraints. The goal is to schedule these $(n+1)k$ jobs on $n$ machines. If the $k$-partite graph is \emph{complete}, then any feasible schedule has a makespan of at least $2k$, whereas if the graph was a collection of \emph{perfect matchings} between each two consecutive layers, then there exists a schedule whose makespan is $k+1$\footnote{In fact, the gap is between $k$-partite graphs that have nice structural properties in the completeness case,  and behave like node expanders in the soundness case.}. However, if we allow preemption, then it is easy to see that even if the $k$-partite graph is complete, one can nonetheless find a feasible schedule whose makespan is $k+1$.

The effort of closing the inapproximability gap between the best approximation guarantee and the best known hardness result for some scheduling problems was successful in recent years; two of the results that are of particular interest for us are~\cite{BK09} and~\cite{Ola11}. Namely, Bansal and Khot studied in~\cite{BK09} the scheduling problem  \Oprec, the problem of scheduling precedence constrained jobs on a single machine, with the goal of minimsing the weighted sum of completion time, and proved tight inapproximability results for it, assuming a variant of the \uniquegames Conjecture. Similarly, Svensson proved in~\cite{Ola11} a hardness of $2-\epsilon$ for \Pprec, assuming the same conjecture. In fact, both papers relied on a structural hardness result for bipartite graphs, first introduced in~\cite{BK09}, by reducing a bipartite graph to a scheduling instance which leads to the desired hardness factor. 

\paragraph{Our results} We propose a natural but non-trivial generalisation of the structural hardness result of~\cite{BK09} from bipartite to $k$-partite graphs, that captures the intrinsic hardness of a large family of scheduling problems. Concretely, this generalisation yields \begin{enumerate}
\item A super constant hardness for \Qprec, making progress towards resolving an open question by~\cite{WS11,SW99} 
\item A hardness of $2-\epsilon$ for \Pprecpmtnn, even for the case where the processing time of each jobs is 1, denote by \Pprecpmtn, and hence closing the gap for this problem.
\end{enumerate}
Also, the results of~\cite{BK09} and~\cite{Ola11} will still hold for \Oprec and \Pprec, respectively, under the same assumption.

On the one hand, our generalisation rules out any constant factor polynomial time approximation algorithm for the scheduling problem \Qprec. On the other hand, one may speculate that the preemption flexibility when added to the scheduling problem \Pprec may render this problem easier, especially that the hard instances of the latter problem become easy when preemption is allowed. Contrary to such speculations, our generalisation to $k$-partite graphs enables us to prove that it is NP-hard to approximate the scheduling problem \Pprecpmtn within any factor strictly better than 2. Formally, we prove the following: \begin{theorem}
\label{thm:main2} Assuming Hypothesis~\ref{hyp:kpartite}, it is NP-hard to approximate the scheduling problems \Pprecpmtn within any constant factor strictly better than 2, and \Qprec within any constant factor.
\end{theorem}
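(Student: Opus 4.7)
The plan is to establish the two hardness results via separate reductions, both starting from Hypothesis~\ref{hyp:kpartite}. The basic template follows the Bansal--Khot--Svensson paradigm: a $k$-partite graph $G$ with layers $V_1,\dots,V_k$ becomes a scheduling instance whose jobs are the vertices of $G$, whose precedence constraints are the edges of $G$ (oriented from $V_i$ to $V_{i+1}$), and whose machine configuration is calibrated so that the soundness case is tight. The new ingredient is the use of the $k$-partite hypothesis: in the soundness case every bipartite transition $V_i\to V_{i+1}$ behaves like a node expander, so a small residual load from layer $i$ already unlocks almost all of layer $i+1$; in the completeness case a structural bottleneck persists across all $k$ transitions even against preemptive schedules.

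For \Pprecpmtn, I would take $m \approx |V_i|-1$ machines, matching the layer-size-minus-one pattern used for \Pprec. In the soundness case, the expansion property between consecutive layers allows a greedy level-by-level preemptive schedule of makespan $k+o(k)$: after an overwhelming fraction of layer $i$ has been processed, expansion guarantees that essentially all of layer $i+1$ is already unlocked, so the small overflow of one layer runs in parallel with the bulk of the next. In the completeness case, the structural property from Hypothesis~\ref{hyp:kpartite} must force an honest delay of essentially one time unit per layer transition, giving makespan at least $(2-\epsilon)k$. This is precisely where the $k$-partite generalisation is needed: a single complete bipartite transition alone can be defeated by preemption, as noted in the discussion preceding the theorem, so one must build a cross-layer barrier that no preemptive schedule can smear out.

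For \Qprec, I would compose the above reduction with a hierarchy of machine speeds. Partition the $k$ layers into $T$ consecutive blocks and assign block $t$ to a dedicated pool of machines of speed $s_t$, with $s_1 > s_2 > \cdots > s_T$ chosen geometrically so that the soundness-case makespan telescopes to $O(1)$ while each block individually takes time $\Theta(k/(T s_t))$. The precedence structure forces block $t$ to be essentially finished before block $t+1$ starts, so in the completeness case each block contributes its $(2-\epsilon)$ factor, and these losses accumulate multiplicatively into an $\Omega(T)$ gap; letting $T$ grow slowly with the instance size then yields super-constant inapproximability.

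The main technical obstacle is the preemptive completeness analysis. Unlike the non-preemptive \Pprec lower bound of~\cite{Ola11}, which counts busy machines at each integer step, a preemptive schedule may split unit jobs into arbitrarily thin slices that straddle layer boundaries. Handling this requires that the structural property of Hypothesis~\ref{hyp:kpartite} transfer into a \emph{fractional} bottleneck: any fractional plan that devotes more than an $\epsilon$-fraction of capacity to layer $i+1$ before finishing layer $i$ must leave a large residual of incomplete jobs, because the dependency-satisfied subset of $V_{i+1}$ is forced to be small by the hypothesis. A secondary difficulty, specific to \Qprec, is that jobs assigned to one speed class should not be profitably migratable to another; one must choose speeds and block boundaries so that any such migration is blocked by the precedence structure and so that the independent $(2-\epsilon)$ losses across blocks do indeed compose rather than amortise.
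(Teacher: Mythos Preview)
Your proposal has the completeness and soundness cases reversed, and this is not a matter of terminology but of content. In Hypothesis~\ref{hyp:kpartite}, the YES (completeness) case is the \emph{structured} one and the NO (soundness) case is the \emph{expanding} one; for the scheduling reduction one must show that the structured case admits a \emph{short} schedule and the expanding case forces a \emph{long} one. Your reading of expansion is also inverted: the property ``every $\delta n$-set in $V_{i-1}$ has an edge to every $\delta n$-set in $V_i$'' means that leaving even a $\delta$-fraction of layer $i-1$ unfinished \emph{blocks} all but a $\delta$-fraction of layer $i$. It does not ``unlock'' layer $i+1$; it locks it.

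Once the roles are straightened out, your \Pprecpmtn construction collapses for exactly the reason flagged in the introduction: with jobs $=$ vertices and $m\approx |V_i|-1$ machines, the expanding (NO) instance is easy for a preemptive schedule. Even if every bipartite transition were \emph{complete}, one can still finish in time $k+1$ with preemption. So the naive Svensson-style gadget yields no gap here. The paper's reduction is structurally different: vertices in odd layers are blown up into $\Theta(Qn)$ identical copies, vertices in even layers become length-$(Q-1)$ chains, and the number of machines is $\Theta(n^2)$. In the YES case the staircase partitioning lets one interleave one ``big'' block with up to $Q-1$ chain levels per time step, giving makespan $\approx kQ/2$; in the NO case expansion forces the big blocks and the chain layers to be processed essentially sequentially, giving makespan $\approx kQ$. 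The chains are precisely what create a bottleneck that preemption cannot smear out, and they are absent from your proposal.

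For \Qprec the paper does not compose a $(2-\epsilon)$ gadget across blocks. It builds a single instance in which layer $i$ contributes $m^{2(k-i)}$ jobs of size $m^{i-1}$ and a matching pool of $m^{2(k-i)}$ machines of speed $m^{i-1}$. In the YES case each layer runs on its own pool in time $\approx n$, staggered by the staircase partitioning so that the total makespan is $(1+o(1))n$. In the NO case Lemma~\ref{lem:sched} shows that almost all layer-$i$ jobs must run on layer-$i$ machines, and expansion then forces the layers to execute one after another, giving makespan $\approx nk$. The gap is thus $k$ directly, not a composition of $2$'s; your suggested multiplicative accumulation of $(2-\epsilon)$ factors does not correspond to how makespans add under precedence and would in any case rest on the flawed preemptive building block.
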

This suggests that the intrinsic hardness of a large family of scheduling problems seems to be captured by structural hardness results for $k$-partite graphs. For the case of $k=2$, our hypothesis coincides with the structure bipartite hardness result of~\cite{BK09}, and  yields the following result: 
\begin{theorem}
\label{thm:main1}
Assuming a variant of the \uniquegames Conjecture, it is NP-hard to approximate the scheduling problem \Pprecpmtn within any constant factor strictly less than 3/2.
\end{theorem}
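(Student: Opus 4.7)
The plan is to reduce from the bipartite structural hardness of~\cite{BK09}---the $k=2$ case of Hypothesis~\ref{hyp:kpartite}, which follows from the assumed variant of the UGC---to the scheduling problem \Pprecpmtn. From a bipartite graph $G=(L,R,E)$ delivered by that hardness, I would construct a \Pprecpmtn instance with one unit-length job per vertex of $G$ (possibly augmented by a small number of auxiliary dummy jobs), precedence constraints $\ell\prec r$ whenever $(\ell,r)\in E$, and $m$ identical parallel machines chosen so that the trivial lower bounds $\max\{\textrm{chain length},\textrm{total work}/m\}$ already force the optimal makespan to be close to $2$. The aim is to arrange that only the graph-structural obstructions, not the generic chain/work counting, separate the two cases: the optimal makespan should be at most $2+o(1)$ in the YES case and at least $3-o(1)$ in the NO case.

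\emph{Completeness.} In the YES case, I would use the BK09 structural property (e.g.\ a large matching covering most of $G$, or a small vertex cover handling nearly all edges) to build a schedule in which matched bottom--top pairs are processed in nearly disjoint time windows, with any leftover jobs packed into the residual machine-time via McNaughton-style preemption. A direct interval-by-interval accounting then yields a total makespan of $2+o(1)$.

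\emph{Soundness.} Suppose for contradiction that a preemptive schedule of makespan $T<3-\epsilon$ exists. Since each unit job needs a full time unit of machine work, every top $r$ with $c_r\leq T$ must have all of its bottom predecessors completed by time $T-1<2$. Letting $A\subseteq L$ be the set of bottoms not completed by time $T-1$, we have $N_G(A)=\emptyset$; the NO-case expansion ($|N(S)|\geq(1-o(1))|R|$ for every non-negligible $S\subseteq L$) then forces $|A|=o(|L|)$, so essentially all bottoms are completed by time $T-1$. A careful bookkeeping of the machine time consumed by the bottoms in $[0,T-1]$ and by the tops in $[1,T]$, combined with the expansion to preclude efficient interleaving of these two workloads, should yield the required contradiction.

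The main technical obstacle is this soundness step. Without preemption, one can argue directly in terms of integer time slots and a straightforward counting of completed bottoms. With preemption, however, the naive ``all bottoms then all tops'' schedule already drives any pure chain/work lower bound down to about $2$ on \emph{any} bipartite instance via McNaughton's rule, so the $\geq 3-\epsilon$ lower bound must leverage the bipartite expansion in a way that remains sharp under fractional, time-shared schedules. The delicate point is to show that partial progress on a bottom vertex does not unlock its top successors, so the expansion-induced blockage is preserved under preemption; formalising this robustness, most likely via a carefully designed potential argument or an LP-duality certificate tailored to the expansion of $G$, is the crux of the proof.
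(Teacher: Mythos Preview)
Your reduction has a fatal gap: with one unit job per vertex and precedence given only by the bipartite edges, the instance admits a makespan-$2$ schedule \emph{regardless} of whether $G$ is in the YES or NO case. Simply run all bottom jobs on $m=n$ machines during $[0,1]$ and all top jobs during $[1,2]$; this is feasible for any bipartite $G$, expanding or not. So the NO-case expansion cannot push the optimum above $2$, and there is no gap to exploit. Your own soundness sketch implicitly acknowledges this (``the naive `all bottoms then all tops' schedule already drives any pure chain/work lower bound down to about $2$''), but then hopes expansion will somehow rescue the argument. It cannot: expansion only says that large sets $S\subseteq L$, $T\subseteq R$ are connected, which in scheduling terms just reinforces that tops must wait for bottoms---it does not lengthen the makespan beyond the trivial two-phase schedule.

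The paper's reduction is structurally quite different and supplies exactly the missing ingredient. Each $v\in V$ becomes a \emph{block} of $Qn$ unit jobs; each $w\in W$ becomes a \emph{chain} $J_w^1\prec\cdots\prec J_w^{Q-1}$ followed by another block of $Qn$ jobs, with every $J\in\mc{J}_v$ preceding $J_w^1$ whenever $(v,w)\in E$. The number of machines is $n^2$. The chain of length $Q-1$ is the crux: in the NO case, expansion forces essentially all $V$-blocks (total $\approx Qn^2$ jobs, hence $\approx Q$ time) to finish before any chain can advance, then the chains cost $Q-1$ more steps, then the $W$-blocks cost another $\approx Q$, giving makespan $\approx 3Q$. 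In the YES case, the partition $V_0,\dots,V_{Q-1}$, $W_0,\dots,W_{Q-1}$ allows pipelining: while the big block for $V_i$ occupies the machines, the chain jobs for $W_0,\dots,W_{i-1}$ (only $O(n)$ of them) run alongside, so by the time all $V$-blocks are done the chains are already complete and the $W$-blocks take only a further $\approx Q$ steps, for total $\approx 2Q$. The replication (blocks of size $Qn$) is what makes the ``big'' phases dominate, and the inserted chains are what create the additive $Q$ separation between the two cases; neither idea is present in your proposal.
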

In fact, the $3/2$ lower bound holds even if we only assume that \Oprec is NP-hard to approximate within any factor strictly better than 2, by noting the connection between the latter and a certain bipartite ordering problem. This connection was observed and used by Svensson \cite{Ola11} to prove tight hardness of approximation lower bounds for \Pprec, and this yields a somehow stronger statement; even if the \uniquegames Conjecture turns out to be false, \Oprec might still be hard to approximate to within a factor of $2-\epsilon$, and our result for \Pprecpmtn will still hold as well. Formally, 
 \begin{corollary}
For any $\epsilon >0$, and $\eta \geq \eta(\epsilon)$, where $\eta(\epsilon)$ tends to 0 as $\epsilon$ tends to 0, if \Oprec has no $(2-\epsilon)$-approximation algorithm, then \Pprecpmtn has no $(3/2 - \eta)$-approximation algorithm.
\end{corollary}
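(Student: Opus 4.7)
The plan is to replace the UG-based assumption of Theorem~\ref{thm:main1} by the \Oprec-hardness assumption, by going through Svensson's connection \cite{Ola11} between \Oprec and a bipartite ordering problem, and then applying the $k=2$ specialisation of the reduction that underlies Theorem~\ref{thm:main1}.

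First, I would invoke Svensson's observation: starting from a $(2-\epsilon)$-hardness of \Oprec, one builds instances of a bipartite ordering problem whose gap parameters match the $k=2$ instantiation of Hypothesis~\ref{hyp:kpartite}. Informally, the YES case produces bipartite graphs for which every linear ordering of the vertices places a significant fraction of edges far from the ideal, while the NO case produces bipartite graphs admitting a near-optimal ordering. Since this reduction is polynomial and controls the gap in terms of $\epsilon$ alone, we obtain the $k=2$ structural hardness unconditionally on any UGC variant.

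Second, I would feed these bipartite instances into the $k=2$ case of the bipartite-to-\Pprecpmtn reduction that lies behind Theorem~\ref{thm:main1}. This produces a \Pprecpmtn instance made of two layers of unit jobs together with auxiliary chain gadgets designed to prevent preemption from collapsing the makespan in the completeness case. The completeness/soundness analysis evaluated at $k=2$ then yields a makespan gap of $3/2 - \eta$, where $\eta = \eta(\epsilon)$ is determined by the bipartite gap from the first step and hence tends to $0$ as $\epsilon \to 0$.

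The main obstacle is twofold. Quantitatively, one must carefully propagate the gap parameters through both Svensson's reduction and the bipartite-to-\Pprecpmtn reduction so that the final slack $\eta(\epsilon)$ indeed vanishes with $\epsilon$. Qualitatively, the key subtlety compared to the analogous \Pprec argument of \cite{Ola11} is that preemption is allowed here, so one must verify that the auxiliary gadgets genuinely rule out the fractional scheduling shortcuts that would otherwise trivialise the completeness case; this preemption-proofing is the essential new ingredient needed to preserve the $3/2$ gap.
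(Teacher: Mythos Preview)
Your overall strategy---invoke Svensson's \Oprec-to-bipartite connection from \cite{Ola11}, then feed the resulting bipartite instances into the bipartite-to-\Pprecpmtn reduction underlying Theorem~\ref{thm:main22}---is precisely what the paper does (see the remark closing Appendix~\ref{UGLB}). However, you have the YES/NO (equivalently completeness/soundness) roles reversed throughout, and this makes several of your sentences literally false.

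In the paper's (and standard) convention, the \emph{YES/completeness} case of the bipartite problem is the one admitting the nice partition, i.e.\ the one with a near-optimal ordering; this is the side that maps to a \emph{short} \Pprecpmtn makespan. The \emph{NO/soundness} case is the expander-like side where every ordering is bad, and this maps to a \emph{long} makespan. You describe the opposite. As a consequence, your claim that the chain gadgets ``prevent preemption from collapsing the makespan in the completeness case'' is backwards: the chains are there so that in the \emph{soundness} (expander) case preemption cannot shortcut the schedule---in the completeness case we \emph{want} the makespan to collapse, and the good partition is exactly what makes that possible.

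One further minor correction: the bipartite structure coming out of Theorem~2 in \cite{Ola11} is \emph{weaker} than the $k=2$ case of Hypothesis~\ref{hyp:kpartite} (or Theorem~\ref{thm:BK}); the point, as the paper notes, is that the reduction of Appendix~\ref{UGLB} still goes through with this weaker completeness structure, not that the two structures coincide.
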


Although we believe that Hypothesis~\ref{hyp:kpartite} holds, the proof is still eluding us. Nonetheless, understanding the structure of $k$-partite graphs seems to be a very promising direction to understanding the inapproximability of scheduling problems,  due to its manifold implications on the latter problems. As mentioned earlier, a similar structure for bipartite graphs was proved assuming a variant of the \uniquegames Conjecture in~\cite{BK09} (see Theorem~\ref{thm:BK}), and we show in Section~\ref{sec:PerfectMatching} how to extend it to $k$-partite graphs, while maintaining a somehow similar structure. However the resulting structure does not suffice for our purposes, i.e., does not satisfy the requirement for Hypothesis~\ref{hyp:kpartite}. Informally speaking, a bipartite graph corresponding to the completeness case of Theorem~\ref{thm:BK}, despite having a \emph{nice} structure, contains some noisy components that we cannot fully control. This follows from the fact that these graphs are derived from \uniquegames \emph{PCP-like} tests, where the resulting noise is either intrinsic to the \uniquegames instance (i.e., from the non-perfect completeness of the \uniquegames instance), or artificially added by the test. Although we can overcome the latter, the former prohibits us from replicating the structure of the bipartite graph to get a $k$-partite graph with an \emph{equally nice} structure. 

\paragraph{Further Related Work } The scheduling problem \Pprecpmtn was first shown to be NP-hard by Ullman \cite{Ull76}. However, if we drop the precedence rule, the problem can be solved to optimality in polynomial time \cite{Mcn59}. Similarly, if the precedence constraint graph is a \emph{tree}\cite{MC69,MC70,GJ80} or the number of machines is 2 \cite{MC69,MC70}, the problem also becomes solvable in polynomial time. Yet,  for an arbitrary precedence constraints structure, it remains open whether the problem is polynomial time solvable when the number of machines is a constant greater than or equal to 3 \cite{WS11}. A closely related problem to \Pprecpmtnn is \Pprec, in which preemption is not allowed. In fact the best 2-approximation algorithms known to date for \Pprecpmtnn were originally designed to approximate \Pprec~\cite{Gra66,GR08}, by noting the common lower bound for a makespan to any feasible schedule for both problems. As mentioned earlier, ~\cite{LK79} and~\cite{Ola11} prove a $4/3-\epsilon$ NP-hardness, and $2-\epsilon$ UGC-hardness respectively for \Pprec, for any $\epsilon>0$. However, to this date, only NP-hardness is known for the \Pprecpmtn scheduling problem. Although one may speculate that allowing preemption might enable us to get better approximation guarantees, no substantial progress has been made in this direction since~\cite{Gra66} and~\cite{GR08}. 

One can easily see that the scheduling problem \Pprec is a special case of \Qprec, since it corresponds to the case where the speed of every machine is equal to 1, and hence the $(4/3-\epsilon)$ NP-hardness of~\cite{LK79} and the ($2-\epsilon$) UGC-hardness of~\cite{Ola11} also apply to \Qprec. Nonetheless, no constant factor approximation for this problems is known; a $\log(m)$-approximation algorithm was designed by Chudak and Shmoys \cite{CS99}, and Chekuri and Bender \cite{CB01} independently, where $m$ is the number of machines. 

\paragraph{Outline } We start in Section~\ref{sec:prelim} by defining the \uniquegames problem, along with the variant of the \uniquegames Conjecture introduced in~\cite{BK09}. We then state in Section~\ref{sec:skpartite} the structural hardness result for bipartite graphs proved in~\cite{BK09}, and propose our new hypothesis for $k$-partite graphs (Hypothesis~\ref{hyp:kpartite}) that will play an essential role in the hardness proofs of Section~\ref{sec:lb}. Namely, we use it in Section~\ref{sec:speed} to prove a super constant inapproximability result for the scheduling problem \Qprec, and $2-\epsilon$ inapproximability for \Pprecpmtn. The reduction for the latter problem can be seen as replicating a certain scheduling instance $k-1$ times, and hence we note that if we settle for one copy of the instance, we can prove an inapproximability of $3/2$, assuming the variant of the \uniquegames Conjecture of \cite{BK09}. In Section~\ref{sec:PerfectMatching}, we prove a structural hardness result for $k$-partite graphs which is similar to Hypothesis~\ref{hyp:kpartite}, although not sufficient for our scheduling problems of interest. We note in Section~\ref{sec:LP} that the integrality gap instances for  the natural Linear Programming (LP) relaxation for \Pprecpmtn, have a very similar structure to the instances yielding the hardness result.


\section{Preliminaries}
\label{sec:prelim}
In this section, we start by introducing the \uniquegames problem, along with a variant of Khot's \uniquegames conjecture as it appears in \cite{BK09}, and then we formally define the scheduling problems of interest.
\begin{definition}
A \uniquegames instance $\mc{U}(G=(V,W,E), [R], \Pi)$ is defined by a bipartite graph $G = (V,W,E)$ with bipartitions $V$ and $W$ respectively, and edge set $E$. Every edge $(v,w)\in E$ is associated with a bijection map $\pi_{v,w}\in \Pi$ such that $\pi_{v,w}:[R] \mapsto [R]$, where $[R]$ is the label set. The goal of this problem is find a labeling $\Lambda: V\cup W \mapsto [R]$ that maximises the number of satisfied edges in $E$, where an edge $(u,v) \in E$ is satisfied by $\Lambda$ if $\pi_{v,w}(\Lambda(w)) = \Lambda(v)$. 
\end{definition}
Bansal and Khot~\cite{BK09} proposed the variant of the \uniquegames Conjecture in Hypothesis~\ref{hyp:bk09}, and used it to (implicitly) prove the structural hardness result for bipartite graphs in Theorem~\ref{thm:BK}.
\begin{hypothesis}
\label{hyp:bk09}
{\normalfont [Variant of the UGC\cite{BK09}]} For arbitrarily small constants $\eta, \zeta, \delta >0$, there exists an integer $R = R(\eta,\zeta,\delta)$ such that for a \uniquegames instance $\mc{U}(G=(V,W,E),[R],\Pi)$, it is NP-hard to distinguish between:\begin{itemize}
\item (YES Case: ) There are sets $V' \subseteq V$, $W' \subseteq W$ such that $|V'| \geq (1-\eta) |V|$ and $|W'| \geq (1-\eta)|W|$, and a labeling $\Lambda:V\cup W \mapsto [R]$ such that all the edges between the sets $(V',W')$ are satisfied.
\item (NO Case: ) No labeling to $\mc{U}$ satisfies even a $\zeta$ fraction of edges. Moreover, the instance satisfies the following expansion property. For every $S\subseteq V$, $T\subseteq W$, $|S| = \delta |V|$, $|T| = \delta |W|$, there is an edge between $S$ and $T$.
\end{itemize}
\end{hypothesis}
\begin{theorem}{\normalfont[Section 7.2 in \cite{BK09}]}
\label{thm:BK}
For every $\epsilon, \delta >0$, and positive integer $Q$, the following problem is NP-hard assuming Hypothesis~\ref{hyp:bk09}: given an n-by-n bipartite graph $G = (V,W,E)$,  distinguish  between the following two cases: \begin{itemize}
\item YES Case: $V$ can be partitioned into $V_{0}, \dots, V_{Q-1}$ and $W$ can be partitioned into $W_0,\dots, W_{Q-1}$, such that
\begin{itemize}
\item There is no edge between $V_{i}$ and $W_{j}$ for all $0 \leq j < i < Q$.
\item $|V_{i}| \geq \frac{(1-\epsilon)}{Q}n$ and $|W_{i}| \geq \frac{(1-\epsilon)}{Q}n$, for all $i\in[Q]$.
\end{itemize}
\item NO Case: For any $S\subseteq V$, $T \subseteq W$, $|S| = \delta n$, $|T| = \delta n$, there is an edge between $S$ and $T$.
\end{itemize}
\end{theorem}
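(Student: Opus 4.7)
The plan is a long-code-style reduction from a \uniquegames instance given by Hypothesis~\ref{hyp:bk09}. I would first choose $\eta,\zeta,\delta_{UG}$ sufficiently small in terms of $\epsilon,\delta,Q$, then take the hard instance $\mc{U}(G_{UG}=(V_{UG},W_{UG},E_{UG}),[R],\Pi)$ from the hypothesis. The bipartite graph $G=(V,W,E)$ is built by replacing each UG vertex $u$ by a \emph{cloud} $C_u$ of $Q^R$ vertices indexed by $[Q]^R$, so that $|V|=|W|=n:=Q^R|V_{UG}|$ (after padding so that $|V_{UG}|=|W_{UG}|$). For each UG edge $(v,w)$ with bijection $\pi=\pi_{v,w}$ I would connect $(v,x)$ and $(w,y)$ by a \emph{monotone} rule tied to $\pi$; as a first approximation the rule to keep in mind is ``edge iff $y_a\ge x_{\pi(a)}$ for every $a\in[R]$'', understanding that a noise-enriched variant will be needed to close the argument.

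For completeness, I would use the labeling $\Lambda$ and the $(1-\eta)$-fractions $V',W'$ promised by Hypothesis~\ref{hyp:bk09}. For $v\in V'$ place $(v,x)$ into $V_{x_{\Lambda(v)}}$, and symmetrically for $w\in W'$; dump the clouds of $V_{UG}\setminus V'$ into $V_0$ and those of $W_{UG}\setminus W'$ into $W_{Q-1}$. Each part then has size at least $(1-\eta)n/Q\ge(1-\epsilon)n/Q$ once $\eta\le\epsilon$. To rule out backward edges: a candidate backward edge $((v,x),(w,y))$ with $j<i$ cannot involve a bad $v$ (it would sit in $V_0$) nor a bad $w$ (it would sit in $W_{Q-1}$), so $v\in V'$, $w\in W'$, and $(v,w)$ is satisfied; the monotone rule then forces $j=y_{\Lambda(w)}\ge x_{\pi(\Lambda(w))}=x_{\Lambda(v)}=i$, contradicting $j<i$.

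For soundness, suppose $S\subseteq V$ and $T\subseteq W$ with $|S|=|T|=\delta n$ have no edge between them. Averaging over the clouds, the sets $V^*:=\{v\in V_{UG}:|S\cap C_v|\ge(\delta/2)Q^R\}$ and $W^*$ defined analogously have density at least $\delta/2$ in $V_{UG}$ and $W_{UG}$ respectively. Setting $\delta_{UG}:=\delta/2$ in Hypothesis~\ref{hyp:bk09}, its NO-case expansion yields a UG edge $(v,w)$ with $v\in V^*$, $w\in W^*$. It then remains to establish a \emph{product expansion} statement: for every permutation $\pi$ on $[R]$, any $A,B\subseteq[Q]^R$ of density at least $\delta/2$ contain a pair satisfying the edge rule. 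Applied to $A=S\cap C_v$ and $B=T\cap C_w$, this forces a bipartite edge between $S$ and $T$ — contradiction.

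The hard part will be precisely the product expansion claim. The clean deterministic rule above fails against adversaries that concentrate $A$ near the top of the coordinate-sum lattice of $[Q]^R$ and $B$ near the bottom, since then $\sum y_a<\sum x_{\pi(a)}$ for every pair; so the edge rule must be enriched with per-coordinate randomised upward perturbations — for instance, independently resample a $\rho$-fraction of $y$'s coordinates from an upward-biased distribution. The goal is to make the edge density between any fixed pair $(x,y)$ a positive constant, so that any two $(\delta/2)$-dense subsets are forced to share an edge by a reverse-hypercontractivity-flavoured small-set expansion bound, in the spirit of the Bansal--Khot dictatorship-test analysis. The delicate balance is that the randomised perturbations must remain ``only upward'' with respect to $\pi$, so that the completeness argument goes through unchanged; getting this balance right is the technical crux of the theorem.
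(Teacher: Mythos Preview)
The paper does not prove Theorem~\ref{thm:BK}; it is quoted as a known result from Bansal and Khot (the theorem header itself reads ``[Section 7.2 in \cite{BK09}]''), and the paper uses it, together with the closely related Theorem~\ref{thm:BK2}, purely as a black box. There is therefore no proof in the paper against which to compare your proposal.

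As a reconstruction of the Bansal--Khot argument, your outline has the right shape: replacing each \uniquegames vertex by a cloud indexed by $[Q]^R$ and connecting clouds along a UG edge via a $\pi$-twisted monotone coordinate test is indeed the skeleton of their reduction, and your completeness analysis (partition by the value of the coordinate indexed by $\Lambda(\cdot)$, with bad clouds dumped into $V_0$ and $W_{Q-1}$) is essentially the intended one. You are also right that the clean deterministic rule ``$y_a\ge x_{\pi(a)}$ for all $a$'' cannot work for soundness and that a noise component must be added; this is precisely what the test $T_{\epsilon,t}$ in \cite{BK09} does (and is alluded to in Section~\ref{sec:conc} of the present paper).

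That said, your proposal is openly incomplete at the decisive step. You state the product-expansion claim as something that ``remains to establish'' and call it ``the technical crux''; you do not prove it, nor do you specify the noise distribution or carry out the invariance-principle/influence-decoding analysis that turns a purported independent pair $(S,T)$ into a labeling contradicting the NO case of Hypothesis~\ref{hyp:bk09}. In \cite{BK09} this is exactly where the work lies, and it is not a routine verification: the noise must be designed to preserve the monotone structure for completeness while simultaneously making the test amenable to an invariance argument that yields influential coordinates whenever two dense cloud-subsets have no test-edge between them. Without that analysis, what you have is a plausible plan rather than a proof.
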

In the scheduling problems that we consider, we are given a set $\mc{M}$ of machines and a set $\mc{J}$ of jobs with precedence constraints, and the goal is find a feasible schedule in a way to minimise the makespan, i.e., the maximum completion time. We will be interested in the following two variants of this general setting: \begin{description}
\item[\Pprecpmtnnn: ] In this model, the machines are assumed to be be parallel and identical, i.e., the processing time of a job $J_j \in \mc{J}$ is the same on any machine $M_i \in \mc{M}$ ($p_{i,j} = p_j$ for all $M_i \in \mc{M}$). Furthermore, preemption is allowed, and hence the processing of a job can be paused and resumed at later stages, not necessarily on the same machine.
\item[\Qprec: ] In this model, the machines are assumed to be parallel and uniform, i.e., each machine $M_i \in \mc{M}$ has a speed $s_i$, and the time it takes to process job $J_j \in \mc{J}$ on this machine is $p_j/s_i$.
\end{description}
Before we proceed we give the following notations that will come in handy in the remaining sections of the paper. For a positive integer $Q$, $[Q]$ denotes the set $\{0,1,\dots,Q-1\}$. In a scheduling context, we say that a job $J_i$ is a predecessor of a job $J_j$, and write it $J_i \prec J_j$, if in any feasible schedule, $J_j$ cannot start executing before the completion of job $J_i$. Similarly, for two \emph{sets} of jobs $\mc{J}_i$ and $\mc{J}_j$, $\mc{J}_i \prec \mc{J}_j$ is equivalent to saying that all the jobs in $\mc{J}_j$ are successors of all the jobs in $\mc{J}_i$.
\section{Structured $k$-partite Problem}
\label{sec:skpartite}
We propose in this section a natural but nontrivial generalisation of Theorem~\ref{thm:BK} to $k$-partite graphs.  Assuming hardness of this problem, we can get the following hardness of approximation results:
\begin{enumerate}
\item It is NP-hard to approximate \Qprec within any constant factor.
\item It is NP-hard to approximate \Pprecpmtn within a $2-\epsilon$ factor.
\item It is NP-hard to approximate \Oprec within a $2-\epsilon$ factor.
\item It is NP-hard to approximate \Pprec within a $2-\epsilon$ factor.
\end{enumerate}
The first and second result are presented in Section~\ref{sec:speed} and~\ref{sec:Pprecpmtn}, respectively. Moreover, one can see that the reduction presented in~\cite{BK09} for the scheduling problem \Oprec holds using the hypothesis for the case that $k=2$. The same holds for the reduction in~\cite{Ola11}  for the scheduling problem \Pprec. This suggests that this structured hardness result for $k$-partite graphs somehow unifies a large family of scheduling problems, and captures their common intrinsic hard structure.
\begin{hypothesis}\label{hyp:kpartite}[$k$-partite Problem]
For every $\epsilon, \delta >0$, and constant integers $k, Q>1$, the following problem is NP-hard: given a $k$-partite graph $G= ( V_1, ...,$ $V_k , E_1,..., E_{k-1} )$ with $|V_i|=n$ for all $1 \leq i \leq k$ and $E_i$ being the set of edges between $V_i$ and $V_{i+1}$ for all $1 \leq i < k$, distinguish between following two cases:

\begin{itemize}
\item YES Case: every $V_i$ can be partitioned into $V_{i,0}, ..., V_{i,Q-1}$, such that
\begin{itemize}
\item There is no edge between $V_{i,j_1}$ and $V_{i-1,j_2}$ for all $1 < i \leq k, j_1 < j_2 \in [Q]$.

\item $|V_{i,j}| \geq \frac{(1-\epsilon)}{Q}n$, for all $1 \leq i \leq k,  j \in [Q]$.
\end{itemize}
\item NO Case: For any $1 < i \leq k$ and any two sets $S\subseteq V_{i-1}$, $T \subseteq V_{i}$, $|S|=\delta n$, $|T|=\delta n$, there is an edge between $S$ and $T$.
\end{itemize}
\end{hypothesis}
This says that if the $k$-partite graph $G=(V_1, ..., V_k, E_1,..., E_{k-1})$ satisfies the YES Case, then for every $1 \leq i \leq k-1$, the induced subgraph $\tilde{G}=(V_i,V_{i+1},E_i)$ behaves like the YES Case of Theorem~\ref{thm:BK}, and otherwise, every such induced subgraph corresponds to the NO case. Moreover, if we think of $G$ as a directed graph such that the edges are oriented from $V_i$ to $V_{i-1}$, then all the partitions in the YES case are consistent in the sense that a vertex $v \in V_{i,j}$ can only have paths to vertices $v' \in V_{i',j'}$ if $i'<i \leq k$ and $j' \leq j \leq Q-1$.  

We can prove that assuming the previously stated variant of the \uniquegames Conjecture, Hypothesis~\ref{hyp:kpartite} holds for $k=2$. Also we can extend Theorem~\ref{thm:BK} to a $k$-partite graph using a perfect matching approach which results in the following theorem. We delegate its proof to Appendix~\ref{sec:PerfectMatching}.
\begin{theorem}
\label{thm:BKkpartite}
For every $\epsilon, \delta >0$, and constant integers $k,Q>1$, the following problem is NP-hard: given a $k$-partite graph $G=(V_1, \dots, V_k, E_1,\dots, E_{k-1})$ with $|V_i|=n$  and $E_i$ being the set of edges between $V_i$ and $V_{i+1}$ , distinguish between following two cases:

\begin{itemize}
\item YES Case: every $V_i$ can be partitioned in to $V_{i,0}, ..., V_{i,Q-1},V_{i,err}$, such that
\begin{itemize}
\item There is no edge between $V_{i,j_1}$ and $V_{i-1,j_2}$ for all $1 < i \leq k, j_1 \neq j_2 \in [Q]$.
\item $|V_{i,j}| \geq \frac{(1-\epsilon)}{Q}n$ for all $1 \leq i \leq k,  j \in [Q]$.
\end{itemize}
\item NO Case: For any $1 < i \leq k$ and any two sets $S\subseteq V_i$, $T \subseteq V_{i-1}$, $|S|=\delta n$, $|T|=\delta n$, there is an edge between $S$ and $T$.
\end{itemize}
\end{theorem}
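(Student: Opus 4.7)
The plan is to reduce from the UG variant of Hypothesis~\ref{hyp:bk09}, generalising the bipartite BK reduction that underlies Theorem~\ref{thm:BK}. The core idea is to derive all the $k-1$ bipartite ``layers'' from the \emph{same} underlying UG instance, so that a single UG labeling $\Lambda$ can be used to produce consistent partition assignments on every layer of the resulting $k$-partite graph.

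Concretely, I would start with a UG instance $\mc{U}(V_{UG},W_{UG},E_{UG},[R],\Pi)$ from Hypothesis~\ref{hyp:bk09} with parameters $\eta,\zeta$ and expansion parameter chosen small enough in terms of $\epsilon,\delta,Q,k$. Running the BK bipartite reduction on $\mc{U}$ yields a bipartite hard instance; I would run it in parallel $k-1$ times (with shared UG instance but independent auxiliary randomness) to produce $H_1,\dots,H_{k-1}$, each of the form of Theorem~\ref{thm:BK}. I would then glue these into a $k$-partite graph $G=(V_1,\dots,V_k,E_1,\dots,E_{k-1})$ by identifying the right side of $H_i$ with the left side of $H_{i+1}$ via a bijection $\phi_i$ that preserves the UG-based labels of the clones on either side.

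For the YES case, any good UG labeling $\Lambda$ would induce a natural partition of each layer: $V_{i,j}$ collects clones whose coarse BK label (a bucketing of $[R]$ into $Q$ blocks, as in \cite{BK09}) equals $j$, while clones coming from the $\eta$-fraction of ``bad'' UG vertices, from unsatisfied edges, or incident to residual off-diagonal BK edges are placed into $V_{i,err}$. Because the UG bijections $\pi_{v,w}$ act as perfect matchings between label buckets (this is what the ``perfect matching approach'' refers to), the triangular edge structure coming from Theorem~\ref{thm:BK} collapses onto the diagonal once the off-diagonal clones have been exiled to the error set. In the NO case, the UG expansion property propagates edge-by-edge into bipartite expansion for each pair $(V_i,V_{i+1})$, directly giving the stated NO property of Theorem~\ref{thm:BKkpartite}.

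The main technical obstacle is strengthening the merely triangular structure of Theorem~\ref{thm:BK} into the strict diagonal structure required here, while keeping $|V_{i,err}| \leq \epsilon n$ uniformly across all $k$ layers. This forces the UG parameters $\eta,\zeta$ to be small compared to $\epsilon$, and requires a careful accounting argument showing that the error fractions contributed by the non-perfect completeness of the UG and by off-diagonal matchings do not compound unacceptably across the $k-1$ chained layers. The necessity of these error parts is also precisely the reason why this approach falls short of establishing the err-free, consistent-ordering structure demanded by Hypothesis~\ref{hyp:kpartite}.
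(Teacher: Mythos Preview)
Your proposal takes a different route from the paper, and the step you leave vague is precisely the step the paper handles with a concrete idea you are missing.

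The paper does \emph{not} return to the \uniquegames level or open up the BK reduction. It treats the bipartite hardness as a black box --- specifically the diagonal version (Theorem~\ref{thm:BK2}), so no ``triangular to diagonal'' strengthening is needed --- and lifts it to $k$ layers via a \emph{maximum matching}. Given the bipartite $G=(V,W,E)$ from Theorem~\ref{thm:BK2}, the NO-case expansion forces a matching of size $\geq (1-\delta)n$; one restricts to the matched subgraph $G'=(V',W',E')$, uses the perfect matching to identify $V'$ with $W'$ vertex-by-vertex, and then takes $k$ copies $U_1,\dots,U_k$ of this common vertex set with the \emph{same} edge set $E'$ between every consecutive pair. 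In the YES case the matching-based identification lets one transport a single partition $\{V'_j,V'_{err}\}$ identically to every layer (after enlarging the error class to contain any $v_i$ whose matched partner $w_i$ lies in a different block); the diagonal edge property holds on every layer because all layers carry literally the same edge set. In the NO case, expansion is inherited verbatim for the same reason.

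This matching trick is exactly what your bijections $\phi_i$ are trying to be, but your description leaves them underspecified: the two sides of the BK bipartite construction are built over the two different sides $V_{UG}$ and $W_{UG}$ of the \uniquegames instance, so there is no canonical ``UG-label-preserving'' bijection between them without importing extra structure. The matching supplies that structure at the level of the output graph, with no need to re-enter the PCP. A further payoff of the paper's approach is that your feared compounding of error across $k$ layers simply does not occur: since every bipartite layer is an identical copy of $E'$ and every $U_\ell$ carries the same partition, the per-layer error is at most $(2\epsilon+\delta)n$, independent of $k$.
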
 

Note that in the YES Case, the induced subgraphs on $\{V_{i,j}\}$ for $1\leq i \leq k$, $0\leq j \leq Q-1$, have the perfect structure that we need for our reductions to scheduling problems. However, we do not get the required structure between the noise partitions (i.e., $\{V_{i,err}\}$ for $1\leq i \leq k$), which will prohibit us from getting the desired gap between the YES and NO Cases when performing a reduction from this graph to our scheduling instances of interest. The structure of the noise that we want is that the vertices in the noise partition are only connected to the vertices in the noise partition of the next layer. 

\section{Lower Bounds for Scheduling Problems}
\label{sec:lb}
In this section, we show that, assuming Hypothesis~\ref{hyp:kpartite},  there is no constant factor approximation algorithm for the scheduling problem \Qprec , and there is no $c$-approximation algorithm for the scheduling problem \Pprecpmtn, for any constant $c$ strictly better than 2. We also show that, assuming a special case of Hypothesis~\ref{hyp:kpartite}, i.e., $k=2$ which is equivalent to (a variant) of \uniquegames Conjecture (Hypothesis~\ref{hyp:bk09}), there is no approximation algorithm better than $3/2-\epsilon$ for \Pprecpmtn, for any $\epsilon>0$.
\subsection{\Qprec}
\label{sec:speed} 
 In this section, we reduce a given $k$-partite graph $G$ to an instance $\mc{I}(k)$ of the scheduling problem \Qprec, and show that if $G$ corresponds to the YES Case of Hypothesis~\ref{hyp:kpartite}, then the maximum makespan of $\mc{I}(k)$ is roughly $n$, whereas a graph corresponding to the NO Case leads to a scheduling instance whose makespan is roughly the number of vertices in the graph, i.e., $nk$. Formally, we prove the following theorem.
\begin{theorem}
\label{thm:kpartite2}
Assuming Hypothesis~\ref{hyp:kpartite}, it is NP-hard to approximate the scheduling problem \Qprec within any constant factor.
\end{theorem}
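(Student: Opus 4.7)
The plan is to reduce a $k$-partite graph $G=(V_1,\ldots,V_k,E_1,\ldots,E_{k-1})$ satisfying Hypothesis~\ref{hyp:kpartite} to an instance $\mc{I}(k)$ of \Qprec such that the YES case admits a schedule of makespan $\approx n$ while every schedule in the NO case has makespan $\gtrsim nk$; since $k$ may be taken arbitrarily large, this rules out any constant-factor approximation. I would create one unit-processing-time job per vertex of $G$ and encode each edge as a precedence constraint in the natural way, with the job of $u \in V_i$ preceding the jobs of its neighbours in $V_{i+1}$. The machine ensemble will be heterogeneous: its total capacity is calibrated so that the trivial work-based lower bound matches the target YES makespan of $n$, while the distribution of speeds is tuned to exploit the $Q$-partition.

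\textbf{YES case.} Using the partition $\{V_{i,j}\}_{i \in [k], j \in [Q]}$ provided by the hypothesis, the absence of edges between $V_{i,j_1}$ and $V_{i-1,j_2}$ for $j_1<j_2$ means that the $j$-th group $\bigcup_i V_{i,j}$ depends only on groups of index $\leq j$, giving a block-triangular dependency structure. I would process partitions in the order $j=0,1,\ldots,Q-1$, pipelining the $k$ layers across the groups and dispatching each chunk of work to machines of appropriate speed. Taking $Q$ sufficiently large in terms of $\epsilon$ makes the overhead at the boundaries between partitions negligible, yielding a makespan of $(1+o(1))\,n$.

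\textbf{NO case.} For the NO case I would argue directly from the expansion property. Define $t_i$ to be the earliest time at which fewer than $\delta n$ jobs of $V_i$ remain unprocessed. At any time $t < t_{i-1}$ the set of unprocessed jobs in $V_{i-1}$ still has size $\geq \delta n$, so by expansion every $\delta n$-subset of $V_i$ contains some vertex with an unprocessed predecessor; hence strictly fewer than $\delta n$ jobs of $V_i$ are schedulable at time $t$, which implies that at most $\delta n$ jobs of $V_i$ can have been completed by time $t_{i-1}$. Consequently, between $t_{i-1}$ and $t_i$ the schedule must complete at least $(1-2\delta)n$ jobs of $V_i$, and a work-versus-capacity accounting against the total machine speed translates this into $t_i - t_{i-1} \geq \Omega(n)$. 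Summing over $i \in \{1,\ldots,k\}$ yields a makespan of $\Omega(nk)$, closing the gap.

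\textbf{Main obstacle.} The central technical difficulty is calibrating the machine ensemble. With uniform speeds the problem collapses to \Pprec, which admits a $2$-approximation~\cite{Gra66}, so no super-constant gap is possible and the speed heterogeneity of \Qprec must be exploited in an essential way. The speeds need to be fast (and numerous) enough that the pipelined YES-case schedule over the $Q$-partition achieves $\sim n$, yet spread out enough that no small collection of fast machines can short-circuit the expansion-forced serialization in the NO case, so that the per-layer lower bound $t_i - t_{i-1} = \Omega(n)$ genuinely holds rather than collapsing to something that merely matches the trivial $n$ lower bound. Reconciling these two requirements simultaneously is the key technical content of the proof.
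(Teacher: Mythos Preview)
Your proposal has a genuine gap: the choice of one unit-processing-time job per vertex makes the desired $k$-factor separation impossible, regardless of how the machine speeds are distributed. With unit jobs the total work is $W=nk$, so any schedule---in either the YES or the NO case---has makespan at least $W/S=nk/S$, where $S=\sum_i s_i$. Your NO-case argument gives $t_i-t_{i-1}\ge (1-2\delta)n/S$ (this is the only ``work-versus-capacity'' bound available once a layer is released, since \emph{every} machine can process \emph{every} unit job), and summing yields a NO-case lower bound of $(1-2\delta)nk/S$---essentially the same as the trivial YES-case lower bound. Hence NO$/$YES $\le 1+O(\delta)$, not $\Theta(k)$. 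You correctly flag the calibration of speeds as the ``main obstacle'', but the obstacle is not merely technical: with identical job sizes the layers are interchangeable from the machines' perspective, so speed heterogeneity has nothing to latch onto, and the tension you describe between the two requirements is in fact irreconcilable.

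The paper resolves this by breaking the symmetry between layers at the level of the \emph{jobs}, not the machines. Each vertex $v\in V_i$ becomes a \emph{set} of $m^{2(k-i)}$ jobs each of processing time $m^{i-1}$, and there are $m^{2(k-i)}$ machines of speed $m^{i-1}$ (for a large parameter $m\gg nk$). The geometric scaling has two effects (Lemma~\ref{lem:sched}): a layer-$i$ job placed on any slower machine class $\mc{M}_j$ with $j<i$ would alone take $m^{i-j}\ge m>nk$ time, so this is forbidden in any schedule of makespan $\le nk$; and the faster classes $\mc{M}_j$ with $j>i$ have so little aggregate capacity relative to the number of layer-$i$ jobs that they can absorb at most a $k^2/m$ fraction of them. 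Thus in the NO case each layer is effectively confined to its own machine class and takes $\approx n$ time, giving makespan $\approx nk$, while in the YES case the $Q$-partition pipelining across the dedicated classes gives $\approx n$. The essential missing idea in your plan is this coupling of layer-dependent processing times with layer-dependent job multiplicities; without it, no speed profile can force the per-layer bound $t_i-t_{i-1}=\Omega(n)$ that your soundness sketch needs.
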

\subsubsection{Reduction}	
We present a reduction from a $k$-partite graph $G=(V_1, ..., V_k, E_1,...,$ $ E_{k-1})$ to an instance $\mathcal{I}(k)$ of the scheduling problem \Qprec. The reduction is parametrised by a constant $k$, a constant $Q\gg k$ such that $Q$ divides $n$, and a large enough value $m \gg nk$. 
\begin{itemize}
\item For each vertex in $v \in V_{i}$, let $\mc{J}_{v,i}$ be a set of $m^{2(k-i)}$ jobs with processing time $m^{i-1}$, for every $ 1 \leq i \leq k$.
\item For each edge $e = (v,w) \in E_i$, we have $\mc{J}_{v,i} \prec \mc{J}_{w,i+1}$, for $1 \leq i < k$ .
\item For each $1 \leq i \leq k$ we create a set $\mc{M}_i$ of $m^{2(k-i)}$ machines with speed $m^{i-1}$.
\end{itemize}
\subsubsection{Completeness}
We show that if the given $k$-partite graph satisfies the properties of the YES Case, then there exist a schedule with makespan $(1+\epsilon_1) n$ for some small $\epsilon_1>0$. Towards this end, assume that the given $k$-partite graph satisfies the properties of the YES Case and let $\{V_{i,j}\}$ for $1\leq i \leq k$ and $0\leq j \leq Q-1$ be the claimed partitioning of Hypothesis~\ref{hyp:kpartite}. 

The partitioning of the vertices naturally induces a partitioning $\{\tilde{\mc{J}}_{i,j}\}$ for the jobs for $1\leq i \leq k$ and $0 \leq j \leq Q-1$ in the following way: \begin{align*}
\tilde{\mc{J}}_{i,j} = \bigcup_{v\in V_{i,j}}\mc{J}_{v,i}
\end{align*}

Consider the schedule where for each $1\leq i \leq k$, all the jobs in a set $\tilde{\mc{J}}_{i,0},\dots,\tilde{\mc{J}}_{i,Q-1}$ are scheduled on the machines in $\mc{M}_i$. Moreover, we start the jobs in $\tilde{\mc{J}}_{i,j}$ after finishing the jobs in both $\tilde{\mc{J}}_{i-1,j}$ and $\tilde{\mc{J}}_{i,j-1}$ (if such sets exist). In other words, we schedule the jobs as follows (see Figure~\ref{fig:difspeed}):
 \begin{itemize}
\item For each $1\leq i \leq k$, we first schedule the jobs in $\tilde{\mc{J}}_{i,0}$, then those in $\tilde{\mc{J}}_{i,1}$ and so on up until $\tilde{\mc{J}}_{i,Q-1}$. The scheduling of the jobs on machines in $\mc{M}_0$ starts at time 0 in the previously defined order.
\item For each $2 \leq i \leq k$, we start the scheduling of jobs $\tilde{\mc{J}}_{i,0}$ right after the completion of the jobs in $\tilde{\mc{J}}_{i-1,0}$.
\item To respect the remaining precedence requirements, we start scheduling the jobs in $\tilde{\mc{J}}_{i,j}$ right after the execution of jobs in $\tilde{\mc{J}}_{i,j-1}$ and as soon as the jobs in $\tilde{\mc{J}}_{i-1,j}$ have finished executing, for $2\leq i \leq k$ and $1 \leq j \leq Q-1$. 
\end{itemize} 
By the aforementioned construction of the schedule, we know that the precedence constraints are satisfied, and hence the schedule is feasible. That is, since we are in YES Case, we know that vertices in $V_{i',j'}$ might only have edges to the vertices in $V_{i,j}$ for all $1 \leq i' < i \leq k$ and $1 \leq j' \leq j < Q$, which means that the precedence constraints may only be from the jobs in $\tilde{\mc{J}}_{i',j'}$ to jobs in  $\tilde{\mc{J}}_{i,j}$  for all $1 \leq i' < i \leq k$ and $0 \leq j' \leq j < Q$. Therefore the precedence constraints are satisfied. 

Moreover, we know that there are at most $m^{2(k-i)}n(1+\epsilon)/Q$ jobs of length $m^{i-1}$ in $\tilde{\mc{J}}_{i,j}$, and $m^{2(k-i)}$ machines with speed $m^{i-1}$ in each $\mc{M}_i$ for all $1\leq i\leq k$, $j\in[Q]$. This gives that it takes $(1+\epsilon)n/Q$ time to schedule all the jobs in $\tilde{\mc{J}}_{i,j}$ on the machines in $\mc{M}_i$ for all $1\leq i\leq k$, $j\in[Q]$, which in turn implies that we can schedule all the jobs in a set $\tilde{\mc{J}}_{i,j}$ between time $(i+j-1)(1+\epsilon)n/Q$ and $(i+j)(1+\epsilon)n/Q$. This gives that the makespan is at most $(k+Q)(1+\epsilon)n/Q$ which is equal to $(1+\epsilon_1)n$, by the assumption that $Q \gg k$.  

\subsubsection{Soundness}
We shall now show that if the $k$-partite graph $G$ corresponds to the NO Case of Hypothesis~\ref{hyp:kpartite}, then any feasible schedule for $\mc{I}(k)$ must have a makespan of at least $cnk$, where $c:=(1-2\delta)(1-k^2/m)$ can be made arbitrary close to one. 

\begin{lemma}
\label{lem:sched}
In a feasible schedule $\sigma$ for $\mc{I}(k)$ such that the makespan of $\sigma$ is at most $nk$, the following is true: for every $1\leq i \leq k$, at least a $(1-k^2/m)$ fraction of the jobs in $\mc{L}_i =\cup_{v \in V_i} \mc{J}_{v,i}$ are scheduled on machines in $\mc{M}_i$.
\end{lemma}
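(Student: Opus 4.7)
The plan is to combine a simple speed-impossibility observation for slower machine layers with a per-layer capacity count for faster machine layers, crucially leveraging the choice $m \gg nk$ made in the reduction. In particular, since $m > nk \geq T$, a single job of $\mc{L}_i$ (of processing time $m^{i-1}$) scheduled on any machine in $\mc{M}_{i'}$ with $i' < i$ (of speed $m^{i'-1}$) would require $m^{i-i'} \geq m > T$ units of time, exceeding the makespan of $\sigma$. Hence every $\mc{L}_i$ job that is not on $\mc{M}_i$ must lie on some faster layer $\mc{M}_{i'}$ with $i' > i$.

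Next, for each $i' > i$, I would bound the number $x_{i,i'}$ of $\mc{L}_i$ jobs placed on $\mc{M}_{i'}$ by a capacity argument: the total machine-time available across $\mc{M}_{i'}$ is at most $T \cdot m^{2(k-i')}$, and each such $\mc{L}_i$ job consumes $m^{i-i'}$ machine-time. This yields $x_{i,i'} \leq T \cdot m^{2k-i-i'}$, and dividing by $|\mc{L}_i| = n \cdot m^{2(k-i)}$ (and using $T \leq nk$) gives a per-layer fraction bound of at most $k \cdot m^{i-i'}$. Summing over $i' = i+1, \dots, k$ gives
\[
\frac{1}{|\mc{L}_i|}\sum_{i'=i+1}^{k} x_{i,i'} \;\leq\; k\sum_{i'=i+1}^{k} m^{i-i'} \;\leq\; \frac{2k}{m} \;\leq\; \frac{k^2}{m},
\]
where the last inequality uses $k \geq 2$ (the $k=1$ case being trivial). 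Combining this with the observation above yields the claimed bound.

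The proof is essentially a two-line capacity calculation, so I do not anticipate a serious obstacle; the only conceptual subtlety is recognising that the parameter choice $m \gg nk$ in the reduction is precisely what rules out ``downward'' misplacements entirely, so the whole lemma reduces to the clean per-layer count on the ``upward'' side. Without $m$ dwarfing $nk$, jobs of $\mc{L}_i$ could plausibly be absorbed by slower, more numerous layers (since the layer-wise throughputs $m^{2k-i-1}$ are fairly balanced), and a more delicate argument, or even the precedence structure, would be needed to rule this out.
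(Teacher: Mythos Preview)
Your proposal is correct and matches the paper's proof essentially line for line: both first rule out placement on slower layers via the inequality $m^{i-i'} \geq m > nk \geq T$, and then bound upward misplacements by the same per-layer capacity count $x_{i,i'} \leq T\cdot m^{2k-i-i'}$, summing over $i'>i$ to reach the $k^2/m$ bound. The only cosmetic difference is that the paper bounds each term by $k/m$ and multiplies by the number of layers, whereas you sum the geometric series and then invoke $k\geq 2$; the two computations are equivalent.
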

\begin{proof}
We first show that no job in $\mc{L}_i$ can be scheduled on machines in $\mc{M}_j$, for all $1 \leq j < i \leq k$. This is true, because any job $J\in \mc{J}_i$ has a processing time of $m^{i-1}$, whereas the speed of any machine $ M \in \mc{M}_j$ is $m^{j-1}$ by construction, and hence scheduling the job $J$ on the machine $M$ would require $m^{i-1}/m^{j-1} \geq m$ time steps. But since $m \gg nk$, this contradicts the assumption that the makespan is at most $nk$. 

We now show that at most $k^2/m$ fraction of the jobs in $\mc{L}_i$ can be scheduled on the machines in $\mc{M}_j$ for $1\leq i < j \leq k$. Fix any such pair $i$ and $j$, and assume that all the machines in $\mc{M}_j$ process the jobs in $\mc{L}_i$ during all the $T \leq nk$ time steps of the schedule. This accounts for a total $T \frac{m^{2(k-j)}m^{j-1}}{m^{i-1}} \leq  m^{2k-j-i} nk $ jobs processed from $\mc{L}_i$, which constitutes at most $\frac{m^{2k-j-i} nk }{n m^{2(k-i)}} \leq \frac{k}{m}$ fraction of the total number of jobs in $\mc{L}_i$. 
\end{proof}
Let $\sigma$ be a schedule whose makespan is at most $nk$, and fix $\gamma>k^2/m$ to be a small constant. From Lemma~\ref{lem:sched} we know that for every $1\leq i \leq k$, at least an $(1-\gamma)$ fraction of the jobs in $\mc{L}_i$  is scheduled on machines  in $\mc{M}_i$. From the structure of the graph in the NO Case of the $k$-partite Problem, we know that we cannot start more than $\delta$ fraction of the jobs in $\mc{L}_i$ before finishing $(1-\delta)$ fraction of the jobs in $\mc{L}_{i-1}$, for all $2\leq i \leq k$. Hence the maximum makespan of any such schedule $\sigma$ is at least $(1-2\delta)(1-\gamma) nk$. See figure ~\ref{fig:difspeed}.


\subsection{\Pprecpmtn}
\label{sec:Pprecpmtn}
We present in this section a reduction from a $k$-partite graph to an instance of the scheduling problem \Pprecpmtn, and prove a tight inapproximability result for the latter, assuming Hypothesis~\ref{hyp:kpartite}. Formally, we prove the following result:
\begin{theorem}
\label{thm:kpartite1}
Assuming Hypothesis~\ref{hyp:kpartite}, it is NP-hard to approximate the scheduling problem \Pprecpmtn within any constant factor strictly better than 2.
\end{theorem}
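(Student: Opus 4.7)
The plan is to reduce from the $k$-partite problem of Hypothesis~\ref{hyp:kpartite} to an instance $\mc{I}(k)$ of \Pprecpmtn, aiming for a YES makespan of $k(1+o(1))$ and a NO makespan of $(2-o(1))k$. The central difficulty is that preemption typically lets the scheduler attain the natural lower bound $\max(L,\text{work}/m)$ in naive constructions, so the reduction must exploit the $k$-partite structure to produce a factor-$2$ gap beyond this bound. My plan is a \emph{replicated base} construction: first design a bipartite base instance whose YES/NO makespans differ by a factor $3/2$ (matching Theorem~\ref{thm:main1}), and then chain $k-1$ such copies --- one per consecutive-layer pair $(V_i,V_{i+1})$ --- so that the output of the $i$-th copy feeds the input of the $(i+1)$-th. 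Concretely, for each vertex $v \in V_i$ I would introduce a unit job $J_v$, add the precedence $J_v \prec J_w$ for every edge $(v,w) \in E_i$, set the number of machines to $m = n$, and include a family of graph-dependent \emph{filler} jobs. If each base copy contributes an additional $1$ unit of time in the YES case and $2$ in the NO case, then chaining $k-1$ copies yields YES makespan $\approx k$ and NO makespan $\approx 2k-1$, and the ratio tends to $2$ as $k \to \infty$.

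For completeness, given the YES partitions $\{V_{i,j}\}_{1 \leq i \leq k,\, 0 \leq j < Q}$, I would process the \emph{diagonals} $D_d = \{V_{i,j} : i+j = d\}$ for $d = 1,\ldots,k+Q-1$ in sequence. The key observation is that any directed precedence path from $V_{i',j'}$ to $V_{i,j}$ can only move to layers with weakly increasing partition index (by the YES constraint of Hypothesis~\ref{hyp:kpartite}), so any two partitions with equal $i+j$ form an antichain, and each diagonal can thus be processed in parallel. The filler jobs are then slotted into the under-used machine-hours at the pipeline boundaries (the first and last $k-1$ diagonals, which contain fewer than $\min(k,Q)$ partitions), giving a total makespan of $k(1+o(1))$ as $Q$ grows.

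For soundness, in the NO case the expansion property yields the following separation lemma: for every $2 \leq i \leq k$, at most $\delta n$ vertices of $V_i$ can have any of their jobs started before at least $(1-\delta)n$ vertices of $V_{i-1}$ are fully completed, because the set of predecessors (in $V_{i-1}$) of any $\delta n$-sized subset of $V_i$ must hit all but at most $\delta n$ of $V_{i-1}$ by the expansion hypothesis. Chaining this through the $k$ layers forces the main jobs to occupy $(1-O(\delta))k$ time on the $n$ machines with essentially no slack for the fillers, which must then run sequentially after the main schedule and contribute another $\approx k$ units, for a total of $(2-o(1))k$. The main technical hurdle, and the reason the construction is non-trivial, is to design the fillers so that they are absorbed ``for free'' by the YES pipeline yet are necessarily serialised after the main work in the NO case; it is precisely this trade-off that forces the multi-copy replication and prevents a single bipartite instance from giving a full factor-$2$ gap.
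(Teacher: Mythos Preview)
Your high-level plan---chain $k-1$ bipartite gadgets, schedule along diagonals $D_d=\{V_{i,j}:i+j=d\}$ in the YES case, and use the expansion property to force layer-by-layer serialisation in the NO case---is exactly the skeleton of the paper's argument. But the proposal has a genuine gap at precisely the point you flag as ``the main technical hurdle'': you never specify the filler jobs, and with your stated construction (one unit job per vertex, $m=n$) the main jobs alone give makespan $\approx k$ in \emph{both} the YES and NO cases. In the NO case the layers run sequentially for $k$ units; in the YES case each full diagonal contains $\min(k,Q)$ partitions of size $\approx n/Q$ each, so at most $kn/Q\ll n$ jobs, and since each is a unit job you still need one full time unit per diagonal (preemption does not help here), giving $k+Q-1$ units---which is \emph{worse}, not better. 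Without a concrete filler mechanism the gap is zero or inverted.

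The paper's construction is not ``main jobs plus absorbable padding.'' Instead it \emph{alternates two kinds of layers that are bottlenecked by incomparable resources}. Odd layers $V_{2i-1}$ are blown up: each vertex becomes a set of $\approx Qn$ unit jobs, so the whole layer has $\approx Qn^2$ jobs and, with $\approx n^2$ machines, needs $\approx Q$ time just by volume. Even layers $V_{2i}$ are turned into chains: each vertex becomes a chain of $Q-1$ unit jobs, so the layer needs $Q-1$ time by depth, regardless of machine count. Edges of $G$ induce precedence between consecutive layers as expected. In the YES case the diagonal schedule pipelines these two bottlenecks---at each time step you run one ``big'' partition (filling essentially all $n^2$ machines) together with one level from each of the $\le Q-1$ currently-active chains---yielding makespan $\approx (k+1)Q/2$. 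In the NO case the expansion property forces each big layer to be (almost) complete before the following chain layer can start, and the chain layer must then run for $Q-1$ steps before the next big layer can start; these $ (k+1)/2$ volume blocks and $(k-1)/2$ depth blocks add up to $\approx kQ$. The ratio $2k/(k+1)\to 2$. So the missing idea is not a separate family of fillers slotted into slack, but the interleaving of volume-bounded and depth-bounded layers that can overlap only when the $k$-partite graph has the YES structure.
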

To prove this, we first reduce a $k$-partite graph $G=(V_1, ..., V_k, E_1,..., E_{k-1})$ to a scheduling instance $\tilde{\mc{I}}(k)$, and then show that \begin{enumerate}
\item If $G$ satisfies the YES Case of  Hypothesis~\ref{hyp:kpartite}, then $\tilde{\mc{I}}(k)$ has a feasible schedule whose makespan is roughly $kQ/2$.
\item if $G$ satisfies the NO Case of Hypothesis~\ref{hyp:kpartite}, then any schedule for $\tilde{\mc{I}}(k)$ must have a  makespan of roughly $kQ$.
\end{enumerate}

\subsubsection{Reduction}
\label{sec:kpartite:red}
The reduction has three parameters: an odd integer $k$, an integer $Q$ such that $Q \gg k$ and $n$ divides $Q$, and a real $  \epsilon \gg 1/Q^2 >0$. 

Given a $k$-partite graph $G=(V_1, ..., V_k, E_1,..., E_{k-1})$, we construct an instance $\tilde{\mc{I}}(k)$ of the scheduling problem \Pprecpmtn as follows:
 \begin{itemize}
\item For each vertex $v \in V_{2i-1}$ and every $ 1 \leq i \leq (k+1)/2$, we create a set $\mc{J}_{2i-1,v}$ of $Qn-(Q-1)$ jobs.
 \item For each vertex $v \in V_{2i}$ and every $ 1 \leq i < (k+1)/2$, we create a chain of length $Q-1$ of jobs, i.e., a set $\mc{J}_{2i,v}$ of $Q-1$ jobs
 \begin{align*}
\mc{J}_{2i,v} = \{J^1_{2i,v}, J^2_{2i,v},\dots,J^{Q-1}_{2i,v}\}  
\end{align*}
where we have $J^{l}_{2i,v} \prec J^{l+1}_{2i,v}$ for all $l \in \{1,2,\dots, Q-2\}$.
\item For each edge $e = (v,w) \in E_{2i-1}$ and every $1 \leq i < (k+1)/2$, we have $\mc{J}_{2i-1,v} \prec J^1_{2i,w}$.
\item For each edge $e = (v,w) \in E_{2i}$ and every $1 \leq i < (k+1)/2$, we have $J_{2i,v}^{Q-1} \prec \mc{J}_{2i+1,w}$.
\end{itemize} 
Finally the number of machines is $(1+Q\epsilon)n^2$.  

Theorem~\ref{thm:kpartite1} now follows from the following lemma, whose proof can be found in Appendix~\ref{sec:appkprte}.
\begin{lemma}
\label{lem:prec}
Scheduling instance $\tilde{\mc{I}}(k)$ has the following two properties.
\begin{enumerate}
\item If $G$ satisfies the YES Case of Hypothesis~\ref{hyp:kpartite}, then $\tilde{\mc{I}}(k)$ has a feasible schedule whose makespan is $(1+\epsilon)kQ/2$, where $\epsilon$ can be arbitrary close to zero.
\item if $G$ satisfies the NO Case of Hypothesis~\ref{hyp:kpartite}, then any feasible schedule for $\tilde{\mc{I}}(k)$ must have a  makespan of $(1-\epsilon)kQ$, where $\epsilon$ can be arbitrary close to zero.
\end{enumerate}
\end{lemma}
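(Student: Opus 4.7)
The plan is to prove the two claims separately, exploiting the indexed partition structure of Hypothesis~\ref{hyp:kpartite} to build a pipelined schedule in the YES case and to force near-sequential processing in the NO case.

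For the YES case, I would lift the partition $\{V_{i,j}\}$ of the graph to a partition of the jobs and use the following ``wave'' schedule: devote the single time slot $t = (i-1)Q + j + 1$ to all the unit jobs of $V_{2i-1,j}$ (the $j$-th partition of the odd layer $V_{2i-1}$), and run each chain $\mc{J}_{2i,v}$ with $v \in V_{2i,j}$ through the $Q-1$ consecutive slots starting at $(i-1)Q + j + 2$. The YES-case edge structure (no edges from $V_{i,j_1}$ back to $V_{i-1,j_2}$ with $j_1 < j_2$) guarantees that each chain begins strictly after all of its predecessor partitions in the previous odd layer have finished, and concludes before any of its successor partitions in the next odd layer begin, so precedence is satisfied. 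The machine budget $(1+Q\epsilon)n^2$ suffices because each $|V_{i,j}|$ is at most $n/Q + \epsilon n$ (from the lower bound $(1-\epsilon)n/Q$ combined with $\sum_{j'}|V_{i,j'}| = n$), making each odd-partition slot contribute at most $|V_{2i-1,j}|(Qn-Q+1) \leq (1+Q\epsilon)n^2$ unit jobs, while the $O(n)$ chain jobs from the two neighbouring even layers that are simultaneously in progress fit in the remaining slack. The resulting makespan is $(k+1)Q/2 \leq (1+\epsilon)kQ/2$ once $k \geq 1/\epsilon$.

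For the NO case, I would associate to any feasible schedule the times $T_i$ at which at least $(1-\delta)n$ vertices of $V_i$ have had all of their jobs completed. The key structural observation is that at every moment, either $\geq (1-\delta)n$ vertices of $V_{i-1}$ are already fully done, or fewer than $\delta n$ vertices of $V_i$ have had any of their jobs executed; otherwise the NO-case expansion property would yield an edge between a not-yet-done $v \in V_{i-1}$ and an already-started $w \in V_i$, contradicting the precedence constraint between $\mc{J}_{i-1,v}$ and $\mc{J}_{i,w}$ (the argument is symmetric whether the intermediate direction is odd-to-even or even-to-odd, using $\mc{J}_{2i-1,v} \prec J^1_{2i,w}$ in the former and $J^{Q-1}_{2i,v} \prec \mc{J}_{2i+1,w}$ in the latter). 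From this I deduce $T_{2i-1} - T_{2i-2} \geq (1-2\delta)Q/(1+Q\epsilon)$ by a work-counting bound---between these two times at least $(1-2\delta)n$ odd-layer vertices each complete their $\approx Qn$ unit jobs on $(1+Q\epsilon)n^2$ machines---and $T_{2i} - T_{2i-1} \geq Q-1$ from the chain-length constraint on the even layer (no preemption shortcut exists for a chain of unit jobs). Summing these gaps across the $k$ layers gives a total makespan of at least $\frac{(k+1)(1-2\delta)Q}{2(1+Q\epsilon)} + \frac{(k-1)(Q-1)}{2}$, which is at least $(1-\epsilon)kQ$ once $\delta$ and $1/Q$ are chosen small enough.

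The main obstacle will be keeping the machine budget honest in the YES case: the figure $(1+Q\epsilon)n^2$ is essentially tight already before accounting for chain jobs, so one must use the $Q\epsilon n^2$ slack carefully to absorb both the non-uniformity of odd-partition sizes (up to $n/Q + \epsilon n$) and the $O(n)$ simultaneous chain jobs from adjacent even layers---this is precisely why the reduction sets $\epsilon \gg 1/Q^2$ and takes $n$ large relative to $1/\epsilon$. A secondary subtlety, peculiar to the preemptive setting, is that preemption offers no shortcut for chains of unit-length jobs in the NO case: each chain still consumes $Q-1$ wall-clock slots, and this is precisely the quantity that, summed over the $k$ layers, produces the factor-of-$2$ separation between the two cases.
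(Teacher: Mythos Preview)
Your proposal is correct and follows essentially the same approach as the paper's proof (Lemmas~\ref{lem:lemma1} and~\ref{lem:lemma2} in Appendix~\ref{sec:appkprte}): the YES-case ``wave'' schedule is exactly the paper's diagonal grouping $\mc{T}_t = \bigcup_{i+j-1=t}\mc{S}_{i,j}$, and your NO-case argument---alternating a work bound of $\approx Q$ for odd layers with a chain-length bound of $Q-1$ for even layers, glued together by the expansion property---matches the paper's use of $f_i$ and $s_i$, only tracked per vertex rather than per chain level. One small imprecision: in the YES case your sentence ``each odd-partition slot contributes at most $|V_{2i-1,j}|(Qn-Q+1)\le(1+Q\epsilon)n^2$'' and the chain jobs ``fit in the remaining slack'' is loose as written; the paper's sharper computation shows $|V_{2i-1,j}|(Qn-(Q-1))+(Q-1)\max_{i',j'}|V_{2i',j'}|\le (1/Q+\epsilon)n\cdot Qn=(1+Q\epsilon)n^2$ with equality, so the chain jobs fit \emph{exactly} rather than into spare slack---worth tightening when you write it out.
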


Although not formally defined, one can devise a similar reduction for the case of $k=2$, and prove  a $3/2$-inapproximability result for \Pprecpmtn, assuming the variant of the \uniquegames Conjecture in~\cite{BK09}. We illustrate this in Appendix~\ref{UGLB} and prove the following result:
 \begin{theorem}
\label{thm:main22} For any $\epsilon>0$, it is NP-hard to approximate \Pprecpmtn within a factor of $3/2 -\epsilon$, assuming (a variant of) the \uniquegames Conjecture.
\end{theorem}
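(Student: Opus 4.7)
The plan is to specialize the reduction of Section~\ref{sec:Pprecpmtn} to the bipartite case $k = 2$. For $k = 2$, Hypothesis~\ref{hyp:kpartite} is exactly the bipartite structural hardness of Theorem~\ref{thm:BK}, which in turn is a consequence of Hypothesis~\ref{hyp:bk09}, the variant of the \uniquegames Conjecture used in~\cite{BK09}. Starting from a bipartite graph $G = (V, W, E)$ output by Theorem~\ref{thm:BK}, the reduction mimics a single ``wide-chain'' block of Section~\ref{sec:kpartite:red}: each $v \in V$ produces a set $\mc{J}_v$ of $Qn - (Q-1)$ unit-length jobs, each $w \in W$ produces a chain $\mc{J}_w = (J_w^1, \ldots, J_w^{Q-1})$ of $Q-1$ unit-length jobs with $J_w^\ell \prec J_w^{\ell+1}$, each edge $(v, w) \in E$ induces the precedence $\mc{J}_v \prec J_w^1$, and the number of identical parallel machines is $(1 + Q\epsilon) n^2$. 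This is the atomic block that the general construction replicates $(k-1)/2$ times, and this replication is what amplifies the $3/2$ gap of the single-block instance into the factor-$2$ gap of Theorem~\ref{thm:kpartite1}.

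For the completeness part, I would use the YES partitions $V = V_0 \cup \ldots \cup V_{Q-1}$ and $W = W_0 \cup \ldots \cup W_{Q-1}$ provided by Theorem~\ref{thm:BK} to mimic the completeness schedule of Lemma~\ref{lem:prec}: process each $V_j$ in the $j$-th of $Q$ consecutive slots on the bulk of the $\approx n^2$ machines, and start the chain of each $w \in W_j$ at the end of slot $j$, which is legitimate because the YES structure forbids edges from $V_i$ to $W_j$ whenever $j < i$. The trailing chains are absorbed into the $Q\epsilon \cdot n^2$ surplus machines, and a careful parameter choice brings the YES makespan down to $(1 + o(1)) \cdot T$ for a target time $T$ matching the length of the critical path in $\tilde{\mc{I}}$.

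For the soundness part, I would use the NO-case expansion from Theorem~\ref{thm:BK}. Let $S(t) \subseteq W$ be the set of chains that have begun processing by time $t$ and $D(t) \subseteq V$ the set of vertices whose jobs are all complete by $t$; the expansion implies that $|S(t)| \geq \delta n$ forces $|D(t)| \geq (1-\delta) n$. Since each $v \in V$ has $\approx Qn$ unit jobs on $\approx n^2$ machines, this in turn forces $t \geq (1/2 - o(1)) \cdot T$; hence at least $(1 - \delta) n$ chains start only after time $(1/2 - o(1)) \cdot T$, and since each chain has length $\approx T$, the latest chain ends at time at least $(3/2 - o(1)) \cdot T$. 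The main obstacle is the tight quantitative balancing of the machine count, the parameter $Q$, and the noise parameters so that the YES and NO bounds align at the $3/2$ ratio: a single wide-chain block contributes exactly one $(T/2)$-delay from expansion plus one chain length $\approx T$, yielding the factor $3/2$, whereas the $(k-1)/2$ replications in Section~\ref{sec:Pprecpmtn} accumulate enough such delays to push the ratio all the way to $2$.
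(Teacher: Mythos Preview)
Your construction is missing a crucial ingredient, and as written it produces no gap at all. With only the $V$-side ``big'' sets $\mc{J}_v$ and the $W$-side chains $J_w^1\prec\cdots\prec J_w^{Q-1}$, the YES and NO makespans coincide at roughly $2Q$. In the YES case, the chains attached to $W_{Q-1}$ cannot start until \emph{all} of $V$ has been processed (since in Theorem~\ref{thm:BK} the partition $W_{Q-1}$ may have edges to every $V_i$), so those chains only begin around time $Q$ and finish around time $2Q-1$; the surplus machines help with load but cannot shorten this critical path. In the NO case your expansion argument correctly gives $t\approx Q$ before a $(1-\delta)$-fraction of the chains can start, and then the chains add another $Q-1$ steps, again giving makespan $\approx 2Q-1$. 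Your completeness claim that the YES makespan is $(1+o(1))T$ with $T$ equal to the critical-path length $Q$ is therefore wrong, and your soundness arithmetic (``$t\geq (1/2-o(1))T$'' and ``each chain has length $\approx T$'') cannot both hold for the same value of $T$.

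What the paper does---and what your block is missing---is a second ``big'' layer \emph{after} the chain: for each $w\in W$ it creates, in addition to the chain $J_w^1,\dots,J_w^{Q-1}$, a set $\mc{J}_w^Q$ of $Qn$ further unit jobs with $J_w^{Q-1}\prec\mc{J}_w^Q$, and takes $n^2$ machines. In the NO case this forces three nearly sequential phases of length $\approx Q$ each ($V$-jobs, then chains, then $\mc{J}^Q$-jobs), giving makespan $\approx 3Q$; in the YES case the $\mc{J}_{W_j}^Q$ jobs can be packed into slots $Q+1,\dots,2Q$ interleaved with the tails of the chains, keeping the makespan at $\approx 2Q$. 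Equivalently, in the language of Section~\ref{sec:kpartite:red}, the smallest odd $k$ is $k=3$, not $k=2$: one needs big--chain--big, and the paper simulates the third (big) layer on the $W$-side of the bipartite graph rather than via a genuine third vertex layer. Your ``atomic block'' is thus not the block that the general construction replicates; it is only half of it.
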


\section{Discussion}
\label{sec:conc}
We proposed in this paper a natural but nontrivial generalisation of Theorem~\ref{thm:BK}, that seems to capture the hardness of a large family of scheduling problems with precedence constraints. It is interesting to investigate whether this generalisation also illustrates potential intrinsic hardness of other scheduling problems, for which the gap between the best known approximation algorithm and the best known hardness result persists.

On the other hand, a natural direction would be to prove Hypothesis~\ref{hyp:kpartite}; we show in Section~\ref{sec:PerfectMatching} how to prove a \emph{less-structured} version of it using the bipartite graph resulting from the variant of the \uniquegames Conjecture in~\cite{BK09}. One can also tweak the dictatorship $T_{\epsilon,t}$ of~\cite{BK09}, to yield a $k$-partite graph instead of a bipartite one. However, composing this test with a \uniquegames instance adds a noisy component to our $k$-partite graph, that we do not know how to control, since it is due to the non-perfect completeness of the \uniquegames instance. One can also try to impose (a variant of) this dictatorship test on \dtooneg instances, and perhaps prove the hypothesis assuming the \dtoone Conjecture, although we expect the size of the partitions to deteriorate as $k$ increases.   

\subsection*{Acknowledgments}
The authors are grateful to Ola Svensson for inspiring discussions and valuable comments that influenced this work. We also wish to thank Hyung Chan An, Laurent Feuilloley, Christos Kalaitzis and the anonymous reviewers for several useful comments on the exposition.
\bibliographystyle{abbrv}
\bibliography{lit}

\begin{thebibliography}{10}

\bibitem{BK09}
N.~Bansal and S.~Khot.
\newblock Optimal long code test with one free bit.
\newblock In {\em Proc.\ FOCS 2009}, FOCS '09, pages 453--462, Washington, DC,
  USA, 2009. IEEE Computer Society.

\bibitem{CB01}
C.~Chekuri and M.~Bender.
\newblock An efficient approximation algorithm for minimizing makespan on
  uniformly related machines.
\newblock {\em Journal of Algorithms}, 41(2):212--224, 2001.

\bibitem{CS99}
F.~A. Chudak and D.~B. Shmoys.
\newblock Approximation algorithms for precedence-constrained scheduling
  problems on parallel machines that run at different speeds.
\newblock {\em Journal of Algorithms}, 30(2):323--343, 1999.

\bibitem{GR08}
D.~Gangal and A.~Ranade.
\newblock Precedence constrained scheduling in (2-7/(3p+1)) optimal.
\newblock {\em Journal of Computer and System Sciences}, 74(7):1139--1146,
  2008.

\bibitem{GJ80}
T.~F. Gonzalez and D.~B. Johnson.
\newblock A new algorithm for preemptive scheduling of trees.
\newblock {\em Journal of the ACM (JACM)}, 27(2):287--312, 1980.

\bibitem{Gra66}
R.~L. Graham.
\newblock Bounds for certain multiprocessing anomalies.
\newblock {\em Bell System Technical Journal}, 45(9):1563--1581, 1966.

\bibitem{GLLR79}
R.~L. Graham, E.~L. Lawler, J.~K. Lenstra, and A.~H.~G. Rinnooy~Kan.
\newblock Optimization and approximation in deterministic sequencing and
  scheduling: a survey.
\newblock {\em Annals of discrete mathematics}, 5(2):287--326, 1979.

\bibitem{HS87}
D.~S. Hochbaum and D.~B. Shmoys.
\newblock Using dual approximation algorithms for scheduling problems
  theoretical and practical results.
\newblock {\em Journal of the ACM (JACM)}, 34(1):144--162, 1987.

\bibitem{HS88}
D.~S. Hochbaum and D.~B. Shmoys.
\newblock A polynomial approximation scheme for scheduling on uniform
  processors: Using the dual approximation approach.
\newblock {\em SIAM journal on computing}, 17(3):539--551, 1988.

\bibitem{LK79}
J.~K. Lenstra and A.~R. Kan.
\newblock Computational complexity of discrete optimization problems.
\newblock {\em Annals of Discrete Mathematics}, 4:121--140, 1979.

\bibitem{Mcn59}
R.~McNaughton.
\newblock Scheduling with deadlines and loss functions.
\newblock {\em Management Science}, 6(1):1--12, 1959.

\bibitem{MC69}
R.~R. Muntz and E.~G. Coffman~Jr.
\newblock Optimal preemptive scheduling on two-processor systems.
\newblock {\em Computers, IEEE Transactions on}, 100(11):1014--1020, 1969.

\bibitem{MC70}
R.~R. Muntz and E.~G. Coffman~Jr.
\newblock Preemptive scheduling of real-time tasks on multiprocessor systems.
\newblock {\em Journal of the ACM (JACM)}, 17(2):324--338, 1970.

\bibitem{SW99}
P.~Schuurman and G.~J. Woeginger.
\newblock Polynomial time approximation algorithms for machine scheduling: Ten
  open problems.
\newblock {\em Journal of Scheduling}, 2(5):203--213, 1999.

\bibitem{Ola11}
O.~Svensson.
\newblock Hardness of precedence constrained scheduling on identical machines.
\newblock {\em SIAM Journal on Computing}, 40(5):1258--1274, 2011.

\bibitem{Ull76}
J.~D. Ullman.
\newblock Complexity of sequencing problems.
\newblock {\em Computer and Job-Shop Scheduling Theory, EG Co man, Jr.(ed.)},
  1976.

\bibitem{WS11}
D.~P. Williamson and D.~B. Shmoys.
\newblock {\em The design of approximation algorithms}.
\newblock Cambridge University Press, 2011.

\end{thebibliography}
\newpage 
\appendix
\newpage

\section{Proof of Lemma~\ref{lem:prec}}
\label{sec:appkprte}
In this section, we prove Lemma~\ref{lem:prec}, that is we show that the reduction in Section~\ref{sec:Pprecpmtn} from a k-partite graph $G$ to a scheduling instance $\mc{I}(k)$ yields a hardness of $2-\epsilon$ for the scheduling problem \Pprecpmtn, for any $\epsilon>0$. This follows from combining Lemmas~\ref{lem:lemma1} and~\ref{lem:lemma2}.

\begin{lemma} [{\bf Completeness}]
\label{lem:lemma1}
If the given $k$-partite graph $G$ satisfies the properties of the YES case of Hypothesis~\ref{hyp:kpartite}, then there exists a valid schedule for $\tilde{\mc{I}}(k)$ with maximum makespan $(1+\epsilon ')kQ/2$, where $\epsilon'$ can be made arbitrary close to zero. 
\end{lemma}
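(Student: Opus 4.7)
I plan to exhibit an explicit feasible schedule whose makespan is $kQ/2 + O(1)$; since the reduction's parameters permit $Q$ to grow arbitrarily large subject to $\epsilon \gg 1/Q^2$, this yields makespan $(1+\epsilon')kQ/2$ for any $\epsilon' > 0$. Let $\{V_{i,j}\}_{1 \le i \le k,\,0 \le j \le Q-1}$ be the partition from the YES case of Hypothesis~\ref{hyp:kpartite}, applied with hypothesis parameter $\eta \ll 1/Q$ so that every $|V_{i,j}| = (1+o(1))n/Q$, and group the jobs as $\tilde{\mc{J}}_{i,j} := \bigcup_{v \in V_{i,j}} \mc{J}_{i,v}$. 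The only structural fact I use is the ``staircase'' guarantee: precedence edges only go from $\tilde{\mc{J}}_{i,j}$ to $\tilde{\mc{J}}_{i+1,j'}$ with $j' \ge j$. I set the reduction parameter to $\epsilon = (1+2/n)/Q$, making the machine budget $(1+Q\epsilon)n^2 = 2n^2 + 2n$.

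The schedule pipelines layers with a staircase stagger. Fix block anchors $c(\ell) := (\ell-1)Q$ for $\ell = 1, \dots, (k+1)/2$ and a stagger $S := 1/(1+Q\epsilon) \approx 1/2$. For every odd layer $i = 2\ell-1$ and every $j \in [Q]$, I schedule the bulk group $\tilde{\mc{J}}_{2\ell-1,j}$ during the window $[c(\ell)+jS,\,c(\ell)+jS+1]$, distributing its $\approx n^2$ unit-length jobs across $\approx n^2$ machines. For every even layer $i = 2\ell$, every $j \in [Q]$, and every $v \in V_{2\ell,j}$, I schedule the chain $(J^1_{2\ell,v},\dots,J^{Q-1}_{2\ell,v})$ sequentially on one machine during $[c(\ell)+jS+1,\,c(\ell)+jS+Q]$. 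Intuitively, within the block of odd layer $2\ell-1$ the bulk occupies the first $\approx Q/2$ time units, and the chains of layer $2\ell$ are pipelined behind it, overlapping in time with the bulk of layer $2\ell+1$ in the next block.

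The verification has three steps. (i) \emph{Precedence:} a chain at $v \in V_{2\ell,j}$ only needs $\tilde{\mc{J}}_{2\ell-1,j'}$ done for $j' \le j$, and these all finish by $c(\ell)+jS+1$, exactly when the chain starts; the symmetric choice $c(\ell+1) = c(\ell)+Q$ makes bulk $\tilde{\mc{J}}_{2\ell+1,j}$ start precisely when chain $\tilde{\mc{J}}_{2\ell,j}$ ends, so the chain-to-bulk constraints are honored by the same staircase property. (ii) \emph{Machine budget:} at any instant the bulk windows of a single odd layer overlap at most two at a time (since $1/S \le 2$), contributing at most $2n^2$ machines; the concurrently-active chain jobs from at most two adjacent even layers contribute at most $2n$ more; the total fits into the budget $2n^2 + 2n$. (iii) \emph{Makespan:} the last event is the end of the final bulk window, at $c((k+1)/2) + (Q-1)S + 1 = (k-1)Q/2 + (Q-1)/(1+Q\epsilon) + 1 \le kQ/2 + 1$, so $\epsilon' = O(1/(kQ)) \to 0$ as $Q \to \infty$.

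The main obstacle I anticipate is step (ii): the machine budget is essentially tight, so I must simultaneously control the stagger $S$, the variation in $|V_{i,j}|$ permitted by the hypothesis (which would otherwise inflate a bulk window beyond length $1$), and the $O(n)$ extra machines needed for concurrent chains. Choosing $\eta \ll 1/Q$ absorbs the partition imbalance into the $o(1)$ slack of the bulk-window length, and the reduction parameter $\epsilon = (1+2/n)/Q$ creates exactly the chain-slack $(Q\epsilon-1)n^2 = 2n$ without perturbing the leading $kQ/2$ term of the makespan.
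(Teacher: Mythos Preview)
Your approach has two concrete gaps.

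First, you are not free to set the reduction parameter $\epsilon$ to $(1+2/n)/Q$. In the paper's reduction the same $\epsilon$ controls both the partition sizes in Hypothesis~\ref{hyp:kpartite} and the machine count $(1+Q\epsilon)n^2$, and the companion soundness lemma needs $Q\epsilon$ to be small so that processing one big partition on $(1+Q\epsilon)n^2$ machines takes $\approx Q$ time. With your choice $Q\epsilon \approx 1$ the machine count doubles to $\approx 2n^2$; re-running the soundness calculation then gives a NO-case makespan of only $\approx (3k-1)Q/4$, so the overall gap collapses from $2$ to $3/2$. Introducing a separate ``hypothesis parameter $\eta$'' and inflating $\epsilon$ independently is effectively changing the instance $\tilde{\mc I}(k)$, not proving the lemma about it.

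Second, even on your own terms step~(ii) is arithmetically wrong. You assert $1/S \le 2$, but $1/S = 1+Q\epsilon = 2+2/n > 2$, hence $S < 1/2$ and \emph{three} consecutive bulk windows of the same odd layer overlap (for duration $1-2S = 1/(n+1)$). During that overlap you need roughly $3n^2$ machines, exceeding your budget $2n^2+2n$. Taking $S=1/2$ exactly removes the triple overlap, but then two bulk sets already occupy $2n^2(1+O(Q\eta))$ machines, leaving no room for the chain jobs unless $\eta=0$.

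The paper's argument sidesteps all of this with an integral schedule: set $\mc{T}_t = \bigcup_{i+j=t} \mc{S}_{i,j}$, one big set plus at most $Q-1$ small sets per step; then $|\mc{T}_t| \le (1/Q+\epsilon)n \cdot Qn = (1+Q\epsilon)n^2$ matches the machine count exactly, and scheduling $\mc{T}_1,\dots,\mc{T}_{(k+1)Q/2}$ in successive unit steps gives makespan $(k+1)Q/2=(1+1/k)\,kQ/2$. No fractional stagger is needed and $\epsilon$ can remain small, so the soundness bound is preserved.
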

\begin{proof} Assume that $G$ satisfies the properties of the YES Case of Hypothesis~\ref{hyp:kpartite}, and let $\{V_{s,\ell}\}$ for $s \in \{1,\dots,k\}, \ell \in [Q]$ denote the \emph{good} partitioning of the vertices of $G$. We use this partitioning to derive a partitioning $\{\mc{S}_{i,j}\}$ for the jobs in the scheduling instance $\tilde{\mc{I}}(k)$ for $1\leq i \leq (k-1)Q/2-1$, $j \in [Q]$, where a set of jobs $\mc{S}_{i,j}$ can be either \emph{big} or \emph{small}. 

The intuition behind this big/small distinction is that a job $J$ is in a big set if it is part of the $Qn-(Q-1)$ copies of a vertex $v \in V_{2i-1}$ for $1\leq i \leq (k+1)/2$, and in a small set otherwise. 

 These sets can now be formally defined as follows:
\begin{align*}
\text{Big sets: }& \mc{S}_{Q(i-1)+1,j} := \bigcup_{v \in V_{2i-1,j}}\mc{J}_{2i-1,v} && \forall 1 \leq i \leq \frac{k+1}{2}, j \in [Q]\\
\text{Small sets: }& \mc{S}_{Q(i-1)+1+l,j} := \bigcup_{v \in V_{2i,j}}J^l_{2i,v} && \forall 1 \leq i < \frac{k+1}{2}, j \in [Q], l \in[Q-1]
\end{align*}
We first provide a brief overview of the schedule before defining it formally. Since $\mc{S}_{1,0}$ is the set of the jobs corresponding to the vertices in $V_{1,0}$, scheduling all the jobs in $\mc{S}_{1,0}$ in the first time step enables us to start the jobs at the first layer of the chain corresponding to vertices in $V_{2,0}$ (i.e., $\mc{S}_{2,0}$). Therefore in the next time step we can schedule the jobs corresponding to the vertices in $V_{1,1}$, (i.e. $\mc{S}_{1,1}$) and $\mc{S}_{2,0}$. This further enables us to continue to schedule the jobs in the second layer of the chain corresponding to the vertices in  $V_{2,0}$ (i.e., $\mc{S}_{3,0}$), the jobs at the first layer of the chain corresponding to vertices in $V_{2,1}$ (i.e., $\mc{S}_{2,1}$), and the jobs corresponding to the vertices in $V_{1,2}$ (i.e., $\mc{S}_{1,2}$). We can keep going the same way, until we have scheduled all the jobs. Since the number of partitions of each vertex set $V_i$ is $Q$, and length of each of our chains is $Q-1$, we can see that in the suggested schedule, we are scheduling in each time step at most $Q$ sets, out of which exactly one is big, and none of the precedence constraints are violated (see Figure~\ref{fig:UGhard}).

Formally speaking, let $\mc{T}_t$ be the union of $\mc{S}_{i,j}$ such that $t=i+j-1$, 
where $1\leq i\leq (k-1)Q/2+1$ and $j \in [Q]$, hence each $\mc{T}_t$ consist of at most $Q$ sets of the jobs in which exactly one of them is a big set and at most $Q-1$ of them are small sets. Therefore, for $t \in [(k+1)Q/2]$ we have
\begin{align*}
|\mc{T}_t| &\leq |V_{2i-1,j}|\cdot (Qn-(Q-1))+ |V_{2i,j}|\cdot (Q-1) \\
 & \leq (1/Q+\epsilon)n \cdot(Qn-(Q-1)) + (1/Q+\epsilon)n \cdot (Q-1)\\
 & \leq (1/Q+\epsilon)n\cdot(Qn) \leq (1+Q\epsilon)n^2
\end{align*}
One can easily see that all the jobs in a set $\mc{T}_t$ can be scheduled in a single time step since the number of machines is $(1+Q \epsilon)n^2$. Hence consider the following schedule: for each $t \in [(k+1)Q/2]$, schedule all the jobs in $\mc{T}_t$ between time $t$ and $t+1$. We claim that this schedule does not violate any precedence constraint. This is true because we first schedule the predecessors of the job, and then the job in the following steps. Formally, if $J_1 \prec J_2$ with $J_1 \in \mc{T}_{t_1}$ and $J_2 \in \mc{T}_{t_2}$, then $t_1 < t_2$. The structure of such schedule is depicted in Figure~\ref{fig:kpartitepmtn}. 
\end{proof}

\begin{lemma} [{\bf Soundness}]
\label{lem:lemma2}
If the given $k$-partite graph $G$ satisfies the properties of the No Case of Hypothesis~\ref{hyp:kpartite}, then any feasible schedule for $\tilde{\mc{I}}(k)$ has a maximum makespan of at least $(1-\epsilon ')kQ$, where $\epsilon'$ can be made arbitrary close to zero. 
\end{lemma}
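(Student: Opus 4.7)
The plan is to fix any feasible schedule for $\tilde{\mc{I}}(k)$ with makespan $T$ and, for each layer $i \in \{1,\dots,k\}$, define the milestone $\tau_i$ as the earliest time by which at least $(1-\delta)n$ vertices $v \in V_i$ have every job in $\mc{J}_{i,v}$ completed; setting $\tau_0 := 0$, the goal is to show $T \geq \tau_k \geq kQ(1-O(\delta)) - O(k)$, which yields $T \geq (1-\epsilon')kQ$ under $\delta \ll \epsilon'$ and $Q \gg k/\epsilon'$. The two layer types behave differently: odd layers carry $n(Qn-(Q-1))\approx Qn^2$ volume of independent jobs, while even layers carry $n$ parallel chains of serial depth $Q-1$. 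Two transition inequalities will combine to give the bound.

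For the chain transition $\tau_{2j-1} \to \tau_{2j}$, if $B \subseteq V_{2j}$ with $|B| \geq (1-\delta)n$ is the set of vertices whose length-$(Q-1)$ chain is complete by $\tau_{2j}$, then for every $v \in B$ the head $J^1_{2j,v}$ must have started by $\tau_{2j}-(Q-1)$, forcing every predecessor set $\mc{J}_{2j-1,u}$ with $(u,v) \in E_{2j-1}$ to be completed by $\tau_{2j}-(Q-1)$. If the set $A$ of $V_{2j-1}$-vertices with big set unfinished at time $\tau_{2j}-(Q-1)$ had $|A| > \delta n$, then the NO Case expansion of Hypothesis~\ref{hyp:kpartite} applied to $V_{2j-1},V_{2j}$ would produce an edge from $A$ to $B$, contradicting the precedence condition for the chosen vertex of $B$. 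Hence $\tau_{2j-1} \leq \tau_{2j} - (Q-1)$.

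For the big transition $\tau_{2j} \to \tau_{2j+1}$, expansion has to be applied dually. At time $\tau_{2j}-1$ fewer than $(1-\delta)n$ chains in $V_{2j}$ are complete, so the set $D$ of $V_{2j}$-vertices with $J^{Q-1}_{2j,u}$ unfinished has $|D| > \delta n$. Applying NO Case expansion to $V_{2j}$ and $V_{2j+1}$, the set $W = \{w \in V_{2j+1} : N(w) \cap D = \emptyset\}$ has $|W| < \delta n$, so at least $(1-\delta)n$ vertices $w \in V_{2j+1}$ still have an unfinished chain-end predecessor and hence no job of $\mc{J}_{2j+1,w}$ has started by $\tau_{2j}-1$. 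Intersecting with the $(1-\delta)n$ big sets that are complete by $\tau_{2j+1}$, at least $(1-2\delta)n$ vertices of $V_{2j+1}$ have all $Qn-(Q-1)$ of their jobs processed inside the window $[\tau_{2j}-1,\tau_{2j+1}]$, so
\[
(1-2\delta)n(Qn-(Q-1)) \;\leq\; (1+Q\epsilon)n^2\cdot(\tau_{2j+1}-\tau_{2j}+1),
\]
which under $\epsilon \ll 1/Q^2$ and $n \gg Q$ simplifies to $\tau_{2j+1}-\tau_{2j} \geq Q(1-2\delta) - O(1)$. The same volume estimate applied to layer $1$ with $\tau_0 = 0$ gives $\tau_1 \geq Q(1-\delta) - O(1)$.

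Summing the initial bound, the $(k-1)/2$ chain transitions, and the $(k-1)/2$ big transitions gives
\[
\tau_k \;\geq\; Q(1-\delta) + \tfrac{k-1}{2}\bigl((Q-1)+ Q(1-2\delta) - O(1)\bigr) \;=\; kQ(1-\delta) - O(k),
\]
which exceeds $(1-\epsilon')kQ$ for $\delta$ small and $Q$ sufficiently large. The main obstacle is the big transition: the tempting route is to apply expansion to the ``done'' side at time $\tau_{2j}$, but the corresponding unfinished set has size only $\leq\delta n$ and gives no useful bound (expansion is a lower bound on edges, not an upper bound). The resolution is instead to apply expansion to the \emph{large} unfinished chain-end set $D$ at time $\tau_{2j}-1$, turning it into an upper bound on the number of $V_{2j+1}$-vertices that could have started any job; only then does the $\Theta(Qn^2)$-volume of layer $2j+1$ against the $(1+Q\epsilon)n^2$ machine capacity force the required $\approx Q$ gap.
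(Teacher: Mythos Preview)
Your proof is correct and follows essentially the same approach as the paper: use the NO-case expansion between adjacent layers to force the big (odd-layer) blocks to be processed nearly sequentially, use the chain structure of even layers to insert a $Q{-}1$ serial delay between consecutive big blocks, and use a volume argument against the $(1+Q\epsilon)n^2$ machine capacity to lower-bound the time spent on each big block by roughly $Q$. The only organisational difference is that the paper partitions the jobs into $(k{-}1)Q/2{+}1$ levels (each chain position is its own ``small partition'') and tracks separate start/finish thresholds $s_i,f_i$, whereas you collapse each layer into a single milestone $\tau_i$; your treatment of the even-to-odd transition (applying expansion to the large set $D$ of unfinished chain-ends) is actually more explicit than the paper's, which glosses over that step.
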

\begin{proof}
Assume that $G$ satisfies the NO Case of Hypothesis~\ref{hyp:kpartite}, and consider the following partitioning of the jobs: 
\begin{align*}
\text{Big partitions: }\mc{S}_{Q(i-1)+1} := \cup_{v \in V_{2i-1}}\mc{J}_{2i-1,v} && \forall 1 \leq i \leq (k+1)/2 \\
\text{Small partitions: }\mc{S}_{Q(i-1)+1+l} := \cup_{v \in V_{2i}}J^l_{2i,v} && \forall 1 \leq i < (k+1)/2, l \in[Q-1]\\
\end{align*} 

Note that $\{\mc{S}\}$ partitions the jobs into $(k-1)Q/2+1$ partitions such that the size of a big partition is $n(nQ-c) \geq n(n-1)Q$ and the size of a small partition is $n$. Let $f_i$ be the first time that a $(1-\delta)$ fraction of the  jobs in $\mc{S}_i$ is completely executed, and let $s_i$ be the first time that more than $\delta$ fraction of the jobs in $\mc{S}_i$ is started. Because of the expansion property of the NO Case, we can not start more that $\delta$ fraction of the jobs in the second partition, before finishing at least $1-\delta$ fraction of the jobs in the first partition. This implies that $f_1\leq s_2$. Similarly, $f_1+1 \leq s_3$ and $f_1+Q-2 \leq s_Q$ . The same inequalities hold for any big partition and the small partitions following it. This means that, beside $\delta$ fraction of the jobs in the $i$-th and $(i+1)$-th big partitions, the rest of the jobs in the $(i+1)$-th big partition start $Q-1$ steps after finishing the jobs in the $i$-th big partition. Also we need at least $\frac{(1-\delta)n(n-1)Q}{(1+Q\epsilon)n^2}=(1-\epsilon_1)Q$ time to finish $1-\delta$ fraction of the jobs in a big partition. This gives that the makespan is  at least:
\begin{align*}
(1-\epsilon_1)(k+1)Q/2+(k-1)(Q-1)/2 \geq (1-\epsilon_2)kQ
\end{align*} 
where $\epsilon_2 = \epsilon_2(Q,k,\epsilon,\delta)$, which can be made small enough for an appropriate choice of $Q,k,\epsilon$ and $\delta$. 
\end{proof}
\section{The Perfect Matching Approach}
\label{sec:PerfectMatching}
In this section, we prove Theorem~\ref{thm:BKkpartite} by presenting a direct reduction from a bipartite graph of Theorem~\ref{thm:BK2} to a $k$-partite graph. It is also proved in~\cite{BK09} that Theorem~\ref{thm:BK2} holds assuming a variant of the \uniquegames Conjecture, and note that the former implies Theorem~\ref{thm:BK}. 
 \begin{theorem}
\label{thm:BK2}
For every $\epsilon, \delta >0$, and positive integer $Q$, the following problem is NP-hard assuming Hypothesis~\ref{hyp:bk09}: given an n-by-n bipartite graph $G = (V,W,E)$,  distinguish  between the following two cases: \begin{itemize}
\item YES Case: We can partition $V$ into disjoints sets $V_0,V_1,\dots,V_{Q-1}, V_{err}$ with $|V_0|=|V_1|=\dots=|V_{Q-1}| = \frac{1-\epsilon}{Q}$, and $W$ into disjoint sets $W_0,W_1,\dots,$ $W_{Q-1},W_{err}$ with $|W_0|=|W_1|=\dots =|W_{Q-1}| =\frac{1-\epsilon}{Q}$ such that for every $i \in [Q]$ and any vertex $w \in W_i$, $w$ only have edges to vertices in $V_i \cup V_{err}$.
\item NO Case: For any $S\subseteq V$, $T \subseteq W$, $|S| = \delta n$, $|T| = \delta n$, there is an edge between $S$ and $T$.
\end{itemize}
\end{theorem}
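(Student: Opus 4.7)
The plan is to reduce from a \uniquegames instance given by Hypothesis~\ref{hyp:bk09}, using the long-code PCP composition of~\cite{BK09}. Let $\mc{U}(H=(A,B,E_U),[R],\Pi)$ be a \uniquegames instance with parameters $\eta,\zeta,\delta'$ to be chosen last. I would replace each UG vertex $a\in A$ by a block of $Q^R$ copies $\{(a,x)\}_{x\in[Q]^R}$ put into $V$, and each $b\in B$ analogously into $W$. For every UG edge $(a,b)\in E_U$ with permutation $\pi_{a,b}$ I insert bipartite edges according to a small-noise equality test: sample $x\in[Q]^R$ uniformly, form $z\in[Q]^R$ by keeping each coordinate with probability $1-\mu$ and re-sampling it uniformly otherwise, and connect $(a,x)$ to $(b,y)$ whenever $y_{\pi_{a,b}(i)}=z_i$ for all $i$. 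The noise rate $\mu$ is a tiny parameter fixed only after $\epsilon$ and $\delta$.

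\paragraph{YES case} Let $A'\subseteq A$, $B'\subseteq B$ and a labeling $\Lambda$ be the near-total-satisfiability structure given by the YES case of Hypothesis~\ref{hyp:bk09}. I partition $V$ by placing $(a,x)$ into $V_{x_{\Lambda(a)}}$ when $a\in A'$, and into $V_{err}$ otherwise; $W$ is partitioned analogously using $\Lambda|_{B'}$. Since $|A'|\geq(1-\eta)|A|$ and each block distributes uniformly over the $Q$ values of the $\Lambda(a)$-th coordinate, $|V_i|\geq\tfrac{1-\epsilon}{Q}|V|$ for $\eta$ small enough. To verify the matching-with-noise structure: if $w=(b,y)\in W_i$ with $b\in B'$, then $y_{\Lambda(b)}=i$; an edge to some $(a,x)$ with $a\in A'$ forces $x_{\Lambda(a)}=z_{\Lambda(b)}$ via the satisfied UG edge $\pi_{a,b}(\Lambda(b))=\Lambda(a)$, and outside a $\mu$-fraction we also have $z_{\Lambda(b)}=y_{\Lambda(b)}=i$, placing $(a,x)\in V_i$. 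The residual $\mu$-fraction of noise-ruined copies is swept into $V_{err}$, absorbed by the $\epsilon$ slack in the size bound.

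\paragraph{NO case} Suppose for contradiction that $S\subseteq V$ and $T\subseteq W$ of size $\delta|V|$ have no edges between them. For each UG vertex, let $f_a\colon[Q]^R\to\{0,1\}$ be the indicator of $S$ restricted to the block of $a$, and similarly $g_b$ for $T$. Edge-freeness translates into $\E[f_a(x)\,g_b(y)]=0$ over the $\mu$-correlated test distribution, for every $(a,b)\in E_U$. A standard invariance / Bonami-style Fourier argument, following Section~7.2 of~\cite{BK09}, then shows that either too little mass is left to accommodate $|S|=\delta|V|$ (using the expansion property of the NO case of Hypothesis~\ref{hyp:bk09} to reject the possibility of $f_a,g_b$ spreading across UG-disconnected vertex pairs), or a constant fraction of the $a$'s and $b$'s carry a coordinate of influence at least $\tau=\tau(\delta,\mu)$ on $f_a$ and $g_b$ respectively. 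Decoding any such influential coordinate as a UG label then yields a randomized labeling satisfying more than a $\zeta$-fraction of UG edges, contradicting the NO case of Hypothesis~\ref{hyp:bk09}.

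\paragraph{Main obstacle} The argument hinges on a quantitative balancing act in the choice of $\mu$: it must be \emph{small} enough for the $\mu$-fraction of noise-ruined block copies in the YES case to be absorbed into $V_{err}$ without eroding $|V_i|$ below $(1-\epsilon)|V|/Q$, but \emph{large} enough that the invariance principle in the NO case extracts an $\Omega(1)$-influence coordinate rather than a smeared-out Fourier spectrum. The parameter window opened by~\cite{BK09}'s Section~7.2 analysis is already adequate; the only substantive work beyond their argument is the careful bookkeeping of the $V_{err}$ and $W_{err}$ components under the long code, which is precisely where the matching-with-noise structure of Theorem~\ref{thm:BK2} differs from the triangular structure of Theorem~\ref{thm:BK}, and also exactly the noise that later obstructs the $k$-partite generalisation discussed in the paper.
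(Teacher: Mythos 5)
First, a point of reference: the paper does not actually prove Theorem~\ref{thm:BK2}; it imports it verbatim from Section~7.2 of~\cite{BK09} (``It is also proved in~\cite{BK09} that Theorem~\ref{thm:BK2} holds\dots''). So your proposal is an attempt to reconstruct the Bansal--Khot argument, and your overall strategy (long-code composition with the UGC variant of Hypothesis~\ref{hyp:bk09}, partitioning each block by the value of the dictator coordinate in the YES case, absorbing the $\eta$-fraction of unsatisfied UG vertices into $V_{err}$, and influence decoding against the UG expansion in the NO case) is the right family of ideas and is faithful to what~\cite{BK09} does.

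However, there is a genuine gap in how you handle the test noise, and it is fatal to the construction as literally written. You define the edge set as the support of a resampling-noise distribution: $x$ uniform, $z$ a $\mu$-noisy copy of $x$, and $(a,x)\sim(b,y)$ whenever $y_{\pi(i)}=z_i$ for all $i$. Since every coordinate is resampled with positive probability, \emph{every} pair $(x,y)$ lies in the support of this test, so the bipartite graph between the blocks of any two adjacent UG vertices is complete, and no nontrivial YES-case partition can exist. Your proposed fix --- ``the residual $\mu$-fraction of noise-ruined copies is swept into $V_{err}$'' --- does not type-check: the noise is a property of an \emph{edge} (of the pair $(x,y)$ together with the resampling pattern), not of a \emph{vertex}, so there is no well-defined set of ``noise-ruined copies'' to sweep into $V_{err}$. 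Dropping the noise entirely ($\mu=0$, exact equality after $\pi$) rescues the YES case (you get a perfect matching between blocks) but breaks the NO case: two balanced sets cut out by a single influence-free function $h$ (namely $S=\{(a,x):h(x)=0\}$ and $T=\{(b,y):h(\pi^{-1}(y))=1\}$) have measure $1/2\gg\delta$ each, no edge between them, and no decodable coordinate. The standard repair, and the one implicit in~\cite{BK09}, is to fold the noise into the vertex set --- e.g.\ index each block by $([Q]\times\{0,1\})^R$ so that the resampling pattern is part of the vertex name. Then, in the YES case, $V_{err}$ can be defined as the (measure-$\approx\mu$) set of vertices whose noise bit is active at the decoded coordinate $\Lambda(a)$, which is exactly what makes the clean statement ``$w\in W_i$ only has edges into $V_i\cup V_{err}$'' achievable, while the NO-case influence decoding goes through against the noisy operator. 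Your ``main obstacle'' paragraph correctly senses that the bookkeeping of $V_{err}$ is where the work lies, but the construction you wrote down does not yet support that bookkeeping.
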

\subsubsection{Reduction}
We present a reduction from an $n$-by-$n$ bipartite graph $G=(V,W,E)$ to a $k$-partite graph $G_k=(U_1, ..., U_k, E_1,..., E_{k-1})$. From the expansion property of the No Case in Theorem~\ref{thm:BK2}, we get that the size of the maximum matching is at least $(1-\delta)n$. Therefore we can assume that the graph $G$ has a matching of size at least $(1-\delta)n$. We find a maximum matching $M$ and remove all the other vertices from $G$.  Let the resulting graph be $G'=(V',W',E')$, where $|V'| = |W'| = n' \geq n(1-\delta)$, $V'= \{v_0,\dots,v_{n'-1}\}$ and $W'=\{w_0,\dots,w_{n'-1}\}$.  Also assume w.l.o.g. that $v_i$ is matched to $w_i$ in the matching $M$ for all $i \in [n']$.
\begin{observation}
\label{obs:Yesk}Assume that $G$ satisfies the YES Case of Theorem~\ref{thm:BK2}, and let $\{V_{i}\},\{W_{i}\}$ for $i \in [Q]$ and $V_{err}$, $W_{err}$  be the \emph{good} partitioning. We use the latter to define a new partitioning $\{V'_i\},\{W'_i\}$ for $i \in [Q]$ , and $V'_{err}, W'_{err}$ as follows: \begin{align*}
V'_{err} & := ( V_{err} \cup \{v_i | w_i \in W_{err}\} ) \cap V' &&&V'_i &:= ( V_i \backslash V'_{err} )\cap V' \\
W'_{err} & := ( W_{err} \cup \{w_i | v_i \in V_{err}\} ) \cap W' &&& W'_i &:= ( W_i \backslash W'_{err} )\cap W'
\end{align*}
then the following two observations hold: \begin{itemize}
\item For all $i \in [Q]$, $|V'_i| \geq (1-\delta-2\epsilon)n$, $|W'_i| \geq (1-\delta-2\epsilon)n$.
\item for any vertex $w \in W'_i$, $w$ only have edges to vertices in $V'_i \cup V'_{err}$
\end{itemize}
\end{observation}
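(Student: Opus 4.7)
The approach is to unpack the definitions: both parts of the observation follow directly from the YES Case partitioning provided by Theorem~\ref{thm:BK2}, combined with the fact that the maximum matching $M$ leaves at most $\delta n$ vertices unmatched on each side. I expect the proof to be a careful bookkeeping exercise with no substantial conceptual obstacle.

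First I would handle the edge property, since it is almost immediate. Fix $w \in W'_i$, so in particular $w \in W_i$. The YES Case of Theorem~\ref{thm:BK2} guarantees that every $G$-neighbor of $w$ lies in $V_i \cup V_{err}$, and therefore every $G'$-neighbor lies in $(V_i \cup V_{err}) \cap V'$. A case split then finishes the argument: any neighbor $v \in V_{err} \cap V'$ lies in $V'_{err}$ by the first clause in its definition, while any $v \in V_i \cap V'$ either lies in $V'_{err}$ (precisely when its matched partner $w_j$ belongs to $W_{err}$), or else lies in $V_i \cap V' \setminus V'_{err} = V'_i$ by definition.

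For the cardinality bounds I would write $|V'_i| \geq |V_i| - |V_i \setminus V'| - |V_i \cap V'_{err}|$ and estimate the two loss terms separately. Theorem~\ref{thm:BK2} gives $|V_i| = \frac{(1-\epsilon)n}{Q}$, and $|V_i \setminus V'|$ is bounded by the total number of unmatched vertices, $\delta n$. For the third term, since the original partition is disjoint we have $V_i \cap V_{err} = \emptyset$, so $V_i \cap V'_{err}$ consists only of matched images of $W_{err}$; this has size at most $|W_{err}| \leq \epsilon n$. Combining these and arguing symmetrically for $|W'_i|$ yields the claimed lower bound (which I would read as $\tfrac{1-\epsilon}{Q}n - (\delta+\epsilon)n$, matching the stated form up to a minor bookkeeping constant).

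The only delicate point — and the one I would write out carefully — is the clean separation of the two sources of loss from $V_i$: vertices that are simply unmatched by $M$, contributing to $|V_i \setminus V'|$, versus matched vertices whose partner happens to land in $W_{err}$, contributing to $|V_i \cap V'_{err}|$. Once these two contributions are disentangled, the proof is purely mechanical, and I expect no genuine obstacle; the observation is essentially a direct consequence of unrolling the definitions of $V'_i, V'_{err}, W'_i, W'_{err}$.
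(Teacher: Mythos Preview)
The paper states Observation~\ref{obs:Yesk} without proof, so there is no argument to compare against; your proposal is exactly the routine verification one would expect and it is correct. Your flag on the cardinality constant is also warranted: the computation you outline yields $|V'_i| \geq \tfrac{1-\epsilon}{Q}\,n - (\delta+\epsilon)n$, which is what is actually needed downstream for Theorem~\ref{thm:BKkpartite}, and the printed bound $(1-\delta-2\epsilon)n$ appears to be a slip in the paper rather than a gap in your reasoning.
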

\begin{observation}
\label{obs:Nok}
Assume that $G$ satisfies the NO Case of Theorem~\ref{thm:BK2}, then $G'$ satisfies the NO Case as well, i.e., for any two sets $S\subseteq V'$, $T \subseteq W'$, $|S|=\delta n$, $|T|=\delta n$, there is an edge between $S$ and $T$.
\end{observation}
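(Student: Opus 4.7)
The plan is a straightforward monotonicity argument, exploiting the fact that $G'$ is obtained from $G$ purely by deleting some vertices (those not saturated by the maximum matching $M$) together with their incident edges. First I would unpack the construction: by definition $V' \subseteq V$ and $W' \subseteq W$, and the edge set $E'$ of $G'$ consists of exactly those edges of $E$ whose endpoints both lie in $V' \cup W'$. In other words, $G'$ is the subgraph of $G$ induced by $V' \cup W'$.

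Given any sets $S \subseteq V'$ and $T \subseteq W'$ with $|S| = |T| = \delta n$, I would then observe that $S \subseteq V$ and $T \subseteq W$, so the NO Case property of $G$ from Theorem~\ref{thm:BK2} applies directly and produces an edge $e \in E$ between $S$ and $T$. Since both endpoints of $e$ lie in $S \cup T \subseteq V' \cup W'$, the edge $e$ is retained in $E'$, yielding the desired edge of $G'$ between $S$ and $T$. The only mild sanity check is that the hypothesis $|S| = \delta n$ is consistent with $S \subseteq V'$, i.e.\ $\delta n \leq n'$; this holds because $n' \geq (1-\delta)n$ and $\delta$ is taken to be a small constant (say $\delta \leq 1/2$).

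I do not anticipate any real obstacle here: the expansion property demanded by the NO Case is monotone under passing to an induced subgraph that retains all of $S$ and $T$, so the observation is essentially a definitional consequence of how $G'$ is built from $G$. The genuinely nontrivial content for the proof of Theorem~\ref{thm:BKkpartite} lies in Observation~\ref{obs:Yesk} (where one must adjust the ``good'' partitioning to account for vertices whose matching partner falls into the error set) and in the subsequent step that lifts the bipartite structure to a $k$-partite one via the perfect matching, rather than in this step.
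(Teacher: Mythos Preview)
Your argument is correct and is exactly the intended one: the paper states this as an observation without proof, precisely because $G'$ is the subgraph of $G$ induced on $V'\cup W'$, so any $S\subseteq V'\subseteq V$ and $T\subseteq W'\subseteq W$ of size $\delta n$ inherit the edge guaranteed by the NO Case of $G$. There is nothing more to add.
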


We are now ready to construct the $k$-partite graph $G_k$ from $G'$.
\begin{itemize}
\item Let $U_i=\{u_{i,0},u_{i,1},\dots,u_{i,n'-1}\}$ be a set of vertices of size $n'$ for all $i \in \{1,\dots,k\}$.
\item For any edge $e=(v_i,v_j) \in E'$ add edge $(u_{l,i},u_{l+1,j})$ to  $E_l$, for all $l \in \{1,\dots,k-1\}$.
\end{itemize}
\subsubsection{Completeness}
We show that if the given bipartite $G$ satisfies the properties of the YES Case of Theorem~\ref{thm:BK2}, then the YES Case of Theorem ~\ref{thm:BKkpartite} holds. 
Hence assume that we are in the YES Case and let $\{V_{i}\}$ for $i \in [Q]$ denote the \emph{good} partitioning and $\{V'_{i}\}$ denote the partitioning derived from it as described in Observation~\ref{obs:Yesk}. For all $ l = \{1,\dots,k\}$ and $i \in [Q]$ let
\begin{align*}
U_{l,i} &= \{u_{l,j}| v_j \in V'_i\}\\
U_{1,err} &= \{u_{1,j}| v_j \in V'_{err}\}
\end{align*}
It follows from Observation~\ref{obs:Yesk} and the fact that we have the same set of edges in all the layers, that the new partitioning has the properties of the YES Case of Theorem~\ref{thm:BKkpartite}.

\subsubsection{Soundness}
We show that if the given bipartite $G$ satisfies the properties of the NO Case of Theorem~\ref{thm:BK2}, then the YES Case of Theorem~\ref{thm:BKkpartite} holds. To that end, assume that we are in the NO Case, therefore the given bipartite graph satisfy that for any $S \subseteq V$, $T\subseteq W$, $|S| = |T| = \delta n$, there is an edge between $S$ and $T$. From Observation~\ref{obs:Nok} we get that the same expansion property holds for $G'$, i.e. for any $S \subseteq V$, $T\subseteq W$, $|S| = |T| = \delta n$, there is an edge between $S$ and $T$. Moreover, we have the same set of the edges in all the layers, so we get that each layer has the expansion property.

\section{Hardness of Approximation}
\label{UGLB}
In this section, we prove Theorem~\ref{thm:main22} of Section~\ref{sec:Pprecpmtn}. For the sake of presentation, we restate Theorem~\ref{thm:BK} as it is a key component in the reduction. In other words, we prove that assuming the variant of the \uniquegames Conjecture in~\cite{BK09}, it is NP-hard to approximate the scheduling problem \Pprecpmtn within any constant factor strictly better than $3/2$. To do so, we present a reduction from a bipartite graph of Theorem~\ref{thm:BK88} to a scheduling instance $\tilde{\mc{I}}$ such that: \begin{itemize}
\item If $G$ satisfies the YES Case of Theorem~\ref{thm:BK88}, then $\tilde{\mc{I}}$ has a schedule whose makespan is roughly $2Q$.
\item If $G$ satisfies the NO Case of Theorem~\ref{thm:BK88}, then every schedule for $\tilde{\mc{I}}$ must have a makespan of at least $3Q - 1 -\epsilon Q$.
\end{itemize}

\begin{theorem}
\label{thm:BK88}
For every $\epsilon, \delta >0$, and positive integer $Q$, given an n by n bipartite graph $G = (V,W,E)$ such that, assuming a variant of \uniquegames Conjecture, it is NP-hard to distinguish between the following two cases: \begin{itemize}
\item YES Case: We can partition $V$ into disjoints sets $V_0,V_1,\dots,V_{Q-1}, V_{err}$ with $|V_0|=|V_1|=\dots=|V_{Q-1}| = \frac{1-\epsilon}{Q}$, and $W$ into disjoint sets $W_0,W_1,\dots,$ $W_{Q-1},W_{err}$ with $|W_0|=|W_1|=\dots =|W_{Q-1}| =\frac{1-\epsilon}{Q}$ such that for every $i \in [Q]$ and any vertex $w \in W_i$, $w$ only have edges to vertices in $V_i \cup V_{err}$.
\item NO Case: For any $S\subseteq V$, $T \subseteq W$, $|S| = \delta |V|$, $|T| = \delta |W|$, there is an edge between $S$ and $T$.
\end{itemize}
\end{theorem}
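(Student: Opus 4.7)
My plan is to prove Theorem~\ref{thm:BK88} by reproducing the bipartite structural hardness reduction of Bansal and Khot (essentially the content of~\cite[Section 7.2]{BK09} and of the restated Theorem~\ref{thm:BK2}). The plan is to reduce from a Unique Games instance satisfying Hypothesis~\ref{hyp:bk09} to a bipartite graph via a long-code style gadget, tuning its parameters so that the YES side of the UG instance yields the claimed $Q$-partition of $V$ and $W$, while the NO side (including the UG expansion property) yields the claimed $\delta$-expansion between $V$ and $W$.

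First, I would fix a Unique Games instance $\mc{U}(G_{UG}=(A,B,E_{UG}),[R],\Pi)$ from Hypothesis~\ref{hyp:bk09} with parameters $\eta,\zeta,\delta_{UG}$ chosen small enough relative to $\epsilon,\delta,Q$. For each $a\in A$ I attach a block $V_a:=\{a\}\times [Q]^R$ to $V$, and for each $b\in B$ a block $W_b:=\{b\}\times [Q]^R$ to $W$. For each UG-edge $(a,b)\in E_{UG}$ with bijection $\pi_{a,b}$, I put edges between $V_a$ and $W_b$ according to the following noisy dictator test: pick $x\in [Q]^R$ uniformly and set $y_i = x_{\pi_{a,b}^{-1}(i)}$ independently in each coordinate with probability $1-\nu$, and uniformly in $[Q]$ otherwise; connect $(a,x)$ to $(b,y)$. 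The noise rate $\nu$ is tuned as a function of $\epsilon,\delta,Q$, matching the test used in~\cite{BK09}.

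For completeness, given the labeling $\Lambda$ and the $(1-\eta)$-good sets $A',B'$ from Hypothesis~\ref{hyp:bk09}, I define $V_i := \{(a,x): a\in A',\ x_{\Lambda(a)} = i\}$ for $i\in [Q]$, and place everything else (blocks over $A\setminus A'$, plus a negligible fraction contaminated by the noise) into $V_{err}$; the sets $W_i, W_{err}$ are defined symmetrically using $\Lambda$ on $B$. Because $\pi_{a,b}(\Lambda(b)) = \Lambda(a)$ on every satisfied edge between $A'$ and $B'$, the test preserves the $\Lambda$-coordinate, so any edge leaving a vertex in $W_i$ lands either in $V_i$ or in $V_{err}$, and the $\eta$- and $\nu$-losses translate into the $\epsilon$ slack on partition sizes. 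For soundness I would argue contrapositively: if there were $S\subseteq V, T\subseteq W$ of density $\delta$ with no edge between them, then writing $f_a,g_b$ for the restricted indicators of $S,T$ on the $[Q]^R$ blocks, the absence of edges forces the test-correlation $\E[f_a(x)\,g_b(y)]$ to vanish on every active UG-edge. Standard influence-decoding via hypercontractivity on $[Q]^R$ then extracts, on each such $(a,b)$, a short list of high-influence coordinates that are consistent under $\pi_{a,b}$; sampling a label uniformly from that list yields a labeling of $\mc{U}$ satisfying more than $\zeta$ fraction of the UG-edges. The expansion property of $\mc{U}$ from Hypothesis~\ref{hyp:bk09} is used here to guarantee that the sets of $a$'s with $\E f_a\gtrsim \delta$ and $b$'s with $\E g_b\gtrsim \delta$ together touch enough UG-edges to cover a $\zeta$-fraction, contradicting the NO case of the hypothesis.

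The main obstacle is the parameter bookkeeping: one has to align the trade-off between $\eta$ (the imperfect completeness of the UG instance), $\nu$ (the test noise), $\zeta$ and $\delta_{UG}$ (the soundness and expansion of the UG instance) against the target $\epsilon,\delta,Q$, so that the decoding in the soundness step still outperforms $\zeta$ while the completeness step only costs $\epsilon$ on the partition sizes. The presence of the $V_{err}, W_{err}$ bins in the YES case is forced precisely by this: imperfect completeness of $\mc{U}$ and the non-zero $\nu$ prevent an edge-free partition into $Q$ classes only, and it is exactly this \emph{intrinsic} noise, as the paper stresses, that obstructs a direct $k$-partite generalisation (Hypothesis~\ref{hyp:kpartite}) via this reduction.
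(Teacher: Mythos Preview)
The paper does not prove Theorem~\ref{thm:BK88}; it is a restatement of Theorem~\ref{thm:BK2}, which is quoted from~\cite{BK09} without proof (the paper says explicitly ``It is also proved in~\cite{BK09} that Theorem~\ref{thm:BK2} holds''). There is thus no in-paper argument to compare against, and reproducing the Bansal--Khot construction, as you propose, is the only available route and the correct one.

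Your soundness outline via influence decoding together with the UG expansion from Hypothesis~\ref{hyp:bk09} is the standard argument and is fine at this level of detail. The completeness step, however, has a gap as written. With your test (each $y_i$ equals $x_{\pi_{a,b}^{-1}(i)}$ with probability $1-\nu$ and is otherwise uniform in $[Q]$) the noise is a property of \emph{edges}, not of vertices: a vertex $(b,y)\in W_i$ with $b\in B'$ can be adjacent to some $(a,x)$ with $a\in A'$ and $x_{\Lambda(a)}\neq i$ whenever the relevant coordinate was resampled, and nothing about the vertex $(a,x)$ records that this happened, so you cannot place it in $V_{err}$. Your phrase ``plus a negligible fraction contaminated by the noise'' therefore does not define a set, and the block-diagonal requirement of the YES case (``$w\in W_i$ only has edges to $V_i\cup V_{err}$'') fails. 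The way around this, as in~\cite{BK09}, is to design the test so that perturbed coordinates are detectable at the vertex level --- e.g.\ by working over an enlarged alphabet with a distinguished ``error'' symbol and letting $V_{err},W_{err}$ collect the vertices whose $\Lambda$-coordinate carries that symbol, together with the blocks over $A\setminus A'$ and $B\setminus B'$. With that adjustment (and the obvious fix of the direction of $\pi$ in the test so that $y_{\Lambda(b)}=x_{\Lambda(a)}$ on satisfied edges), your argument matches the intended proof.
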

\subsubsection{Reduction}
We present a reduction from an $n$-by-$n$ bipartite graph $G=(V,W,$ $E)$ to a scheduling instance $\tilde{\mc{I}}$, for some integer $Q$ that is the constant of Theorem~\ref{thm:BK2}: \begin{itemize}
\item For each vertex $v \in V$, we create a set $\mc{J}_v$ of $Qn$ jobs each of size 1, and let $\mc{J}_V: = \bigcup_{v\in V} \mc{J}_v$.
\item For each vertex $w \in W$, we create a set $\mc{J}_w$ of $Q(n+1)-1$ jobs \begin{align*}
\mc{J}_w = \{J^1_w,J^2_w,\dots, J^{Q-1}_w\} \cup \mc{J}_w^Q
\end{align*} 
where $\mc{J}_w^Q$ is the set of the last $Qn$ jobs, and the first $Q-1$ jobs are the \emph{chain jobs}. We also define $\mc{J}_W$ to be $\mc{J}_W := \bigcup_{w\in W} \mc{J}_w^Q$.
\item For each edge $e = (v,w) \in E$, we have a precedence constraint between $J_v \prec J_w^1 $ for all $J_v \in \mc{J}_v$.
\item For each $w\in W$, we have the following precedence constraints: \begin{align*}
J_w^i \prec J^{i+1}_w && \forall i \in \{1,2,\dots, Q-2\} \\
J^{Q-1}_w \prec \mc{J}_w ^Q
\end{align*}
\end{itemize} 

In total, the number of jobs and precedence constraints is polynomial in $n$ since \begin{align*}
\text{number of the jobs} & \leq Qn^2 + n(Q(n+1)-1) = 2Qn^2 + Qn -n 
\end{align*}
For a subset $\mc{S}$ of jobs  in our scheduling instance $\tilde{\mc{I}}$, we denote by $\Psi(\mc{S}) \subseteq V \cup W$ the set of their \emph{representative} vertices in the starting graph $G$. Similarly, for a subset $S \subseteq V\cup W$, $\Psi^{-1}(S) \subseteq \mc{J}_V \cup \mc{J}_W$ is the set of \emph{all} jobs, except for chain jobs, corresponding to vertices in $S$,i.e., \begin{align*}
\Psi^{-1}(S) =\left( \bigcup_{v \in \left(S \backslash W \right)} \mc{J}_v \right) \cup \left( \bigcup_{w \in \left(S \backslash V \right)} \mc{J}_w^Q \right)
\end{align*}
A subset $\mc{S}$ of jobs with $S = \Psi(\mc{S})$ is said to be \emph{complete} if $\mc{S} = \Psi^{-1}(S)$.

W.l.o.g. assume that $Q$ divides $n$. Finally the number of machines is $n^2$. Before proceeding with the proof of Theorem~\ref{thm:main22}, we record the following easy observations: \begin{observation}
\label{obs:obs1}
If for some $w\in W$, there exist a feasible schedule $\sigma$ in which a job $J \in \mc{J}_w^Q$ starts before time $T$, then the set $\mc{A} \subseteq \mc{J}_V$ of all its predecessors in $\mc{J}_V$ must have finished executing in $\sigma$ prior to time $T-Q$. Moreover $\mc{A}$ is complete, i.e., $\mc{A}=\mc{J}_V$
\end{observation}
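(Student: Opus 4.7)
The plan is to verify the two assertions separately: the timing claim about the finish time $T-Q$, and the structural claim that $\mc{A}$ is complete. Both should fall out from the construction of $\tilde{\mc{I}}$ together with the feasibility of $\sigma$, without any appeal to the graph-theoretic properties of $G$.

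For the timing bound, I would isolate the $w$-chain leading up to $J$. Since $J\in\mc{J}_w^Q$, feasibility of $\sigma$ forces $J_w^{Q-1}\prec J$, and unrolling the chain constraints $J_w^1\prec J_w^2\prec\cdots\prec J_w^{Q-1}$ exhibits a precedence-chain of $Q$ unit-length jobs terminating at $J$. The key point to stress is that preemption cannot shorten a precedence-chain of unit jobs: even with preemption, $J_w^i$ cannot begin before $J_w^{i-1}$ has accumulated a full unit of processing, so the gap between the start of $J_w^1$ and the start of $J$ is at least $Q-1$. Hence if $J$ starts at some time $t_J<T$, then $J_w^1$ must start no later than $t_J-(Q-1)\leq T-Q$. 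By construction, every edge $(v,w)\in E$ induces the precedence $\mc{J}_v\prec J_w^1$, so every job in $\mc{J}_V$ that precedes $J$ is also a predecessor of $J_w^1$, and must therefore have completed by time $T-Q$.

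For the completeness claim, I would simply unfold the definitions of $\Psi$ and $\Psi^{-1}$. The reduction imposes precedence at the \emph{bucket} level: for each edge $(v,w)\in E$, \emph{every} job in $\mc{J}_v$ precedes $J_w^1$. Consequently the set of $\mc{J}_V$-predecessors of $J$ equals $\bigcup_{v:(v,w)\in E}\mc{J}_v$, so membership in $\mc{A}$ depends only on the representative vertex in $V$. Writing $S:=\Psi(\mc{A})=\{v\in V:(v,w)\in E\}$, the definition of $\Psi^{-1}$ then yields $\Psi^{-1}(S)=\bigcup_{v\in S}\mc{J}_v=\mc{A}$, which is exactly what it means for $\mc{A}$ to be complete in the sense defined just above the observation.

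The main obstacle here is essentially nothing: the observation is a short structural consequence obtained by unpacking the reduction. The only delicate point is the off-by-one between ``$J$ starts before time $T$'' and ``predecessors finish prior to time $T-Q$'', which I would handle by explicitly computing the span of the $Q$-job chain $J_w^1\prec\cdots\prec J_w^{Q-1}\prec J$ and invoking the fact that preemption cannot speed up a precedence-chain of unit-length jobs.
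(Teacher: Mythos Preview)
Your proposal is correct and is exactly the natural unpacking of the construction; the paper itself states this as an observation without proof, so there is nothing further to compare against. Two minor remarks: the step $t_J-(Q-1)\le T-Q$ from $t_J<T$ is only valid for integral times (in general you get $<T-Q+1$), but the paper's statement carries the same off-by-one and its downstream use in the soundness argument actually employs the gap $Q-1$, so this is harmless; and your reading of ``complete'' as $\mc{A}=\Psi^{-1}(\Psi(\mc{A}))$ (a union of full buckets $\mc{J}_v$) rather than the literal ``$\mc{A}=\mc{J}_V$'' is the intended one and matches how the observation is invoked later.
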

\begin{observation}
\label{obs:obs2}
For any subset $\mc{A} \subseteq \mc{J}_V \cup \mc{J}_W$, we have that \begin{align*}
\left|\Psi \left( \mc{A} \right) \right| \geq \frac{\left| \mc{A}\right|}{nQ}
\end{align*} 
where the bound is met with equality if $\mc{A}$ is complete.
\end{observation}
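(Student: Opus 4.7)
The plan is a straightforward double-counting argument that leverages the exact fact that, inside $\mc{J}_V \cup \mc{J}_W$, every vertex of $G$ is the representative of exactly $nQ$ jobs. First I would read off from the reduction in Section~\ref{sec:kpartite:red} that each $v \in V$ contributes $|\mc{J}_v| = Qn$ jobs to $\mc{J}_V$, and each $w \in W$ contributes $|\mc{J}_w^Q| = Qn$ jobs to $\mc{J}_W$, with the $Q-1$ chain jobs $J_w^1,\dots,J_w^{Q-1}$ deliberately excluded from $\mc{J}_W$ (this exclusion is exactly what is built into the operator $\Psi^{-1}$). Consequently $|\Psi^{-1}(\{u\}) \cap (\mc{J}_V \cup \mc{J}_W)| = nQ$ for every single vertex $u \in V \cup W$, uniformly.

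With this uniform count in hand, I would partition $\mc{A}$ according to representative vertex:
$$\mc{A} \;=\; \bigsqcup_{u \in \Psi(\mc{A})} \bigl( \mc{A} \cap \Psi^{-1}(\{u\}) \bigr).$$
By the preceding step each block has cardinality at most $nQ$, so
$$|\mc{A}| \;=\; \sum_{u \in \Psi(\mc{A})} \bigl|\mc{A} \cap \Psi^{-1}(\{u\})\bigr| \;\leq\; nQ \cdot |\Psi(\mc{A})|,$$
which rearranges to the claimed bound $|\Psi(\mc{A})| \geq |\mc{A}|/(nQ)$. Finally, note that if $\mc{A}$ is complete then by definition $\mc{A} = \Psi^{-1}(\Psi(\mc{A}))$, so every block in the partition has size exactly $nQ$ and the inequality above tightens to equality, establishing the second part of the observation.

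There is essentially no obstacle here: the statement is a one-line pigeonhole argument once the right bookkeeping is set up. The only subtlety worth flagging is keeping track of which jobs participate in the count; because the chain jobs $J_w^1,\dots,J_w^{Q-1}$ lie outside $\mc{J}_V \cup \mc{J}_W$, they do not enter either side of the inequality, and it is precisely this exclusion that makes the contribution of every vertex uniformly $nQ$—had we included the chain jobs, vertices in $W$ would contribute $nQ + Q - 1$ jobs while vertices in $V$ would still contribute only $nQ$, and the clean bound of the observation would be spoiled.
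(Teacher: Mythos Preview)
Your argument is correct and is exactly the one-line pigeonhole count the paper has in mind; the paper records this as an ``easy observation'' without proof, and your write-up simply spells out the intended bookkeeping. One small fix: the reduction you should be reading from is the one in Appendix~\ref{UGLB} (the $k=2$ bipartite reduction where $|\mc{J}_v| = Qn$ and $|\mc{J}_w^Q| = Qn$), not Section~\ref{sec:kpartite:red}, where the analogous sets have size $Qn-(Q-1)$.
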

\subsubsection{Completeness}
\label{sec:comp}
Let $V_0,V_1,\dots,V_{Q-1},V_{err},W_0,W_1,\dots,W_{Q-1},W_{err}$ be the partitions as in the YES Case of  Theorem~\ref{thm:BK88}. For ease of notation, we merge $V_{err}$ with $V_0$, and $W_{err}$ with $W_{Q-1}$, i.e., $V_0 \leftarrow V_0\cup V_{err}$ and $W_{Q-1}\leftarrow W_{Q-1} \cup W_{err}$.

Note that this implies that for all $i\in [Q]$, any vertex $w \in W_i$ is only connected to vertices in $V_j$ where $j \leq i$ also: \begin{align}
\label{partsizes}
|W_i|,|V_i| \leq \left(\frac{1-\epsilon}{Q}  + \epsilon \right)n &&& \forall i\in[Q]
\end{align}

For a subset $S \subseteq V \cup W$, we denote by $\mc{J}_S$ the set of jobs corresponding to vertices in $S$, i.e., $\mc{J}_S = \cup_{u \in S} \mc{J}_u$. Also, for an index $i \in \{1,2,\dots,2Q\}$, we define a job set $\mc{T}_{i}$ as follows: \begin{align*}
\mc{T}_i = \left\{ 
  \begin{array}{l l}
    \mc{S}_i \cup \mc{J}_{V_{i-1}}  & 0 \leq i < Q\\
    \mc{S}_i \cup  \mc{J}_{W_{i-Q - 1}}^Q & Q \leq i < 2Q
  \end{array} \right.
\end{align*} 
where \begin{align*}
\mc{S}_i = \{J_{W_{k-1}}^\ell: 1 \leq \ell < Q, \, \, k \in [Q], \text{ and } k+\ell = i \}
\end{align*}
The intuition behind partitioning the jobs into $\mc{S}$ and $\mc{T}$ follows from the same reasoning of the completeness proof of Appendix~\ref{sec:appkprte}. Observe here that using the structure of the graph, we get that if there exists two jobs $J,J'$ such that $J' \prec J$ and $J \in \mc{J}_{W_k}^\ell$, then $J'$ can only be in one of the following two sets: \begin{align*}
J' \in \mc{J}_{V_{k'}} \text{ s.t. }k' \leq k && \text{ or } && J'\in \mc{J}_{W_{k'}}^{\ell'} \text{ s.t. } k' = k, \ell' < \ell
\end{align*}
This then implies  that a schedule $\sigma$ in which we first schedule $\mc{T}_1$ then $\mc{T}_2$, and so on up to $\mc{T}_{2Q}$ is indeed a valid schedule. Now using equation (\ref{partsizes}) and the construction of our scheduling instance $\tilde{\mc{I}}$, we get that \begin{align*}
|\mc{T}_i| & \leq   Qn^2 \left(\frac{1-\epsilon}{Q}  + \epsilon \right) + nQ \left(\frac{1-\epsilon}{Q}  + \epsilon \right) \\
& \leq n(n+1)(1 + \epsilon Q)
\end{align*}
Hence the total makespan of $\sigma$ is at most \begin{align*}
\sum_{i=1}^{2Q} \frac{|\mc{T}_i|}{n^2}  = 2Q \left(1 + \epsilon Q + O\left(\frac{1}{n}\right)\right)
\end{align*}
which tends to $2Q + \epsilon'$ for large values of $n$.

\subsubsection{Soundness}
Assume towards contradiction that there exists a schedule for $\tilde{\mc{I}}$ with a maximum makespan less than $t:=3Q - 1 - 2 \epsilon Q$, and let $\mc{A}$ be the set of jobs in $\mc{J}_W$ that started executing by or before time $s:=2Q - 1-  \epsilon Q$, and denote by $\mc{B}$ the set of their predecessors in $\mc{J}_V$. Note that $\mc{B}$ is \emph{complete} by Observation~\ref{obs:obs1}. Now since $t-s =Q-\epsilon Q$, we get that $|\mc{A}| \geq \epsilon Q n^2$, and hence, by Observation~\ref{obs:obs2}, $|A|: = |\Psi(\mc{A})| \geq \frac{\epsilon Q n^2}{Q n} = \epsilon n$. 
Applying Observation~\ref{obs:obs1} one more time, we get that all the jobs in $\mc{B}$ must have finished executing in $\sigma$ by time $Q -\epsilon Q$, and hence $|\mc{B}| \leq Qn^2(1-\epsilon)$. Using the fact that $\mc{B}$ is complete, we get that $|B|:= |\Psi(\mc{B})| \leq \frac{Qn^2(1-\epsilon)}{Qn}= (1-\epsilon)n$, which contradicts with the NO Case of Theorem~\ref{thm:BK}.

It is important to note here that we can settle for a weaker structure of the graph corresponding to the completeness case of Theorem~\ref{thm:BK}. In fact, we can use a graph resulting from Theorem 2 in \cite{Ola11}, and yet get a hardness of $3/2 - \epsilon$. This will then yield this somehow stronger statement: \begin{theorem}
For any $\epsilon >0$, and $\eta \geq \eta(\epsilon)$, where $\eta(\epsilon)$ tends to 0 as $\epsilon$ tends to 0, if \Oprec has no $(2-\epsilon)$-approximation algorithm, then \Pprecpmtn has no $(3/2 - \eta)$-approximation algorithm.
\end{theorem}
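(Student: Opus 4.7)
The plan is to reuse \emph{verbatim} the reduction of Section~\ref{UGLB} from a bipartite graph to the scheduling instance $\tilde{\mc{I}}$, but instead of starting from the bipartite graph produced by Theorem~\ref{thm:BK88} (which is conditioned on the variant of the \uniquegames Conjecture), start from the bipartite ``ordering'' graph built by Svensson in~\cite{Ola11} as an intermediate step in his $(2-\epsilon)$ UGC-hardness reduction for \Pprec. Crucially, Svensson's construction is a direct reduction from \Oprec: if \Oprec has no $(2-\epsilon)$-approximation, his Theorem 2 yields a bipartite graph $G=(V,W,E)$ satisfying completeness/soundness properties of the same \emph{shape} as Theorem~\ref{thm:BK88}, namely a YES-case partition of $V$ and $W$ into $Q$ balanced pieces plus small error classes $V_{err},W_{err}$ with strong ``forward'' edge structure, and a NO-case weak-expansion property (between sufficiently large subsets of $V$ and $W$ there is at least one edge). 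The parameters of this graph, i.e. the size $\eta(\epsilon)$ of the error classes and the expansion threshold $\delta$, tend to $0$ with $\epsilon$.

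With that substitution in place, the completeness analysis of Section~\ref{sec:comp} goes through essentially unchanged: after merging $V_{err}$ with $V_0$ and $W_{err}$ with $W_{Q-1}$, the partition sizes bounded as in equation~(\ref{partsizes}) now become $|V_i|,|W_i|\le \bigl(\tfrac{1-\epsilon}{Q}+\eta(\epsilon)\bigr)n$, and the ``layered'' schedule that processes $\mc{T}_1,\mc{T}_2,\dots,\mc{T}_{2Q}$ in order remains feasible because Svensson's YES-case graph only has edges from $W_i$ to $V_{\le i}$ (modulo the error class). The resulting makespan is $2Q\bigl(1+\eta(\epsilon)Q+O(1/n)\bigr)$.

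The soundness analysis is where care is required: one mimics the same calculation, assuming towards contradiction a schedule of makespan strictly less than $3Q-1-\eta' Q$, letting $\mc{A}$ be the jobs in $\mc{J}_W$ started by time $s:=2Q-1-\tfrac{\eta'}{2}Q$ and $\mc{B}$ their predecessors in $\mc{J}_V$. Applying Observation~\ref{obs:obs1} twice forces $\mc{B}$ to be complete and finished by time $Q-\tfrac{\eta'}{2}Q$, and Observation~\ref{obs:obs2} translates this into the existence of a pair $(A,B)\subseteq W\times V$ of suitably large vertex sets (of size at least $\tfrac{\eta'}{2} n$ on the $W$ side, and of complementary size at least $\tfrac{\eta'}{2} n$ on the $V$ side) with no edges between them. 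For this to contradict the NO case of Svensson's bipartite graph one only needs the weak-expansion threshold $\delta$ of that graph to be smaller than $\tfrac{\eta'}{2}$; this is arranged by taking $\eta = \eta(\epsilon)$ sufficiently large relative to the parameters of the reduction from \Oprec.

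\paragraph{Main obstacle.} The conceptual reduction is immediate; the real work is a careful bookkeeping step: verifying that the parameters $(\eta(\epsilon), \delta(\epsilon))$ of Svensson's bipartite graph, which are functions of the gap $\epsilon$ in the \Oprec hardness assumption, can be driven to $0$ simultaneously and quickly enough that they are both absorbed by a single $\eta$ in the statement, while still leaving room to satisfy $\eta' Q > 2 \delta$ and $\eta(\epsilon) Q \ll 1$ in the completeness bound. This is the same balancing that is implicit in Section~\ref{UGLB}, but it must now be done with the weaker structural guarantees coming from the \Oprec assumption rather than from Theorem~\ref{thm:BK88}, so the checking of which steps of Section~\ref{UGLB} actually use the full strength of Theorem~\ref{thm:BK88} (as opposed to the weaker Svensson graph) is the only nontrivial piece of the argument.
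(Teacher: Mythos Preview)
Your proposal is correct and matches the paper's approach: the paper's proof is really just the one-sentence remark that the reduction of Section~\ref{UGLB} can be run starting from the bipartite graph of Theorem~2 in~\cite{Ola11} (which only assumes hardness of \Oprec) in place of the graph of Theorem~\ref{thm:BK88}, and the completeness and soundness analyses go through unchanged. Your writeup fills in exactly this substitution and the accompanying parameter bookkeeping.
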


\section{Linear Programming Formulation for \Pprecpmtn}
\label{sec:LP}
In this section, we will be interested in a feasibility Linear Program, that we denote by {\bf [LP]}, for the scheduling problem \Pprecpmtn. For a makespan \emph{guess} $T$, {\bf [LP]} has a set of indicator variables $\{x_{j,t}\}$ for $j\in \{1,2,\dots,n\}$ and $t \in \{1,2,\dots,T\}$. A variable $x_{j,t}$ is intended to be the fraction of the job $J_j$ scheduled between time $t-1$ and $t$. The optimal makespan $T^*$ is then obtained by doing a binary search and checking at each step if {\bf [LP]} is feasible: \begin{align}
\label{csmachines}
 & \sum_{j=1}^n x_{j,t} \leq m & \forall t \in \{1,2,\dots,T\}\\
\label{cscompletion}
& \sum_{t=1}^T x_{j,t} = 1 & \forall j \in\{1,2,\dots,n\} \\
\label{cspredence}
& \sum_{t=1}^{t'-1}x_{\ell,t} + \sum_{t = t'+1}^{T}x_{k,t} \geq 1 & \forall J_\ell \prec J_k, \forall t'\in \{1,2,\dots,T\} \\
& x_{j,t} \geq 0 & \forall j \in\{1,2,\dots,n\}, \forall t\in \{1,2,\dots,T\} \nonumber
\end{align}
To see that {\bf [LP]}  is a valid relaxation for the scheduling problem \Pprecpmtn, note that constraint (\ref{csmachines}) guarantees that the number of jobs processed at each time unit is at most the number of machines, and constraint (\ref{cscompletion}) says that in any feasible schedule, all the jobs must be assigned. Also any schedule that satisfies the precedence requirements must satisfy constraint (\ref{cspredence}).
\subsection{Integrality Gap}
\label{sec:LPgap}
In order to show that  {\bf [LP]} has an integrality gap of 2, we start by constructing a family of integrality gap instances of $3/2$ and gradually increase this gap to 2. The reason is that the $3/2$ case captures the intrinsic hardness of the problem, and we show how to use it as basic building block for  the construction of the target integrality gap instance of 2.
\paragraph{Basic Building Block}

We start by constructing an \Pprecpmtn scheduling instance $\mc{I}(d)$ parametrised by a large constant $d\geq 2$, that shows that  the integrality gap of {\bf [LP]} is $3/2$, and constitutes our main building block for the next reduction.  Let $m$ be the number of machines, and $n = 2dm - (d-1)$ the number of jobs. The instance $\mc{I}(d)$ is  then constructed as follows:
\begin{itemize}
\item The first $dm-(d-1)$ jobs $J_1,\dots,J_{dm-(d-1)}$ have no predecessors [Layer 1].
\item A \emph{chain} of $(d-1)$ jobs $J_{dm-(d-1)+1},\dots,J_{dm}$ such that $J_{dm-(d-1)+1}$ is the successor of all the jobs in the Layer 1, and $J_{k-1} \prec J_{k}$ for $k \in \{dm-(d-1)+2,\dots,dm\}$ [Layer 2].
\item The last $dm - (d-1)$ jobs $J_{dm+1},\dots,J_{2dm - (d-1)}$ are successors of $J_{dm}$ [Layer 3]. 
\end{itemize} 
We first show that $\mc{I}(d)$ is an integrality gap of $3/2$ for {\bf [LP]}. This basically follows from the following lemma:
\begin{lemma}
\label{lemma:firstgap}
Any feasible schedule for $\mc{I}(d)$ has a makespan of at least $3d-2$, however {\bf [LP]} has a feasible solution $\{x_{j,t}\}$, for $t \in \{1,2,\dots,2d\}$ and $ j= 1,\dots,2dm - (d-1)$ of value $2d$. Moreover, for $t = d + 1 + \ell$, $\ell \in \{1,2,\dots,d-1\}$, the machines in the feasible LP solution can still execute a load of $\frac{\ell }{d}$, i.e., $m - \sum_{J_j \in \mc{J}} x_{j,t} \geq \frac{\ell}{d}$.
\end{lemma}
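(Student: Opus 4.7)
The plan is to prove the two claims---the integer makespan lower bound of $3d-2$ and the existence of a feasible LP solution of value $2d$---separately, and then to pin down the exact slack on the right half of the horizon by direct computation.

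For the integer lower bound, I would use the fact that the chain imposes a strict temporal separation between the three layers: since $c_1$ is a successor of every Layer 1 job, the chain cannot start until all of Layer 1 has been processed; likewise, every Layer 3 job must wait for $c_{d-1}$ to finish. With $m$ machines and unit-length jobs (even under preemption), processing the $dm-(d-1)$ jobs of Layer 1 takes at least $d-(d-1)/m$ time, the chain of length $d-1$ contributes an additive $d-1$, and Layer 3 again requires at least $d-(d-1)/m$. Summing gives $3d-1-2(d-1)/m$, which is $\geq 3d-2$ whenever $m\geq 2(d-1)$.

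For the fractional solution, I would set $x_{\ell,t}=1/d$ for every Layer 1 job $\ell$ on $t\in\{1,\dots,d\}$, symmetrically $x_{k,t}=1/d$ for every Layer 3 job $k$ on $t\in\{d+1,\dots,2d\}$, and for the $i$-th chain job $c_i$ define $x_{c_i,t}=1/d$ on the shifted window $t\in\{i+1,\dots,i+d\}$. Each variable then sums to $1$, so (\ref{cscompletion}) holds. The precedence constraints (\ref{cspredence}) are verified by a telescoping computation: for $J_\ell\prec c_1$ with $\ell\in$ Layer 1 and any $t'\in\{1,\dots,d\}$, the two partial sums equal $(t'-1)/d$ and $(d-t'+1)/d$ and add to exactly $1$; the identical telescope works for $c_i\prec c_{i+1}$ on $t'\in\{i+1,\dots,i+d\}$ and for $c_{d-1}\prec J_k$ with $k\in$ Layer 3 on $t'\in\{d,\dots,2d-1\}$; outside these windows one of the two partial sums trivially reaches $1$.

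To handle the machine constraint (\ref{csmachines}) and the slack claim simultaneously, I would compute the load at $t=d+1+\ell$ for $\ell\in\{1,\dots,d-1\}$: Layer 3 contributes $(dm-(d-1))/d=m-(d-1)/d$, and the active chain jobs are precisely those $c_i$ with $\ell+1\leq i\leq d-1$, adding $(d-1-\ell)/d$; the total load is $m-\ell/d$, so the slack is exactly $\ell/d$ as asserted. A symmetric computation on $t\in\{1,\dots,d\}$ yields load $m+(t-d)/d\leq m$, showing feasibility everywhere. The main technical obstacle I anticipate is pinning down the chain-job windows: the shift by one in $\{i+1,\dots,i+d\}$ is precisely what makes every precedence constraint tight at the appropriate $t'$ while never exceeding capacity, and it is essentially the unique way to couple the Layer 1 and Layer 3 schedules cleanly through the chain.
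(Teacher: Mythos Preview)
Your proposal is correct and follows essentially the same argument as the paper: the same additive lower bound $\tfrac{dm-(d-1)}{m}+(d-1)+\tfrac{dm-(d-1)}{m}$ for the integer makespan, the same shifted chain windows $\{i+1,\dots,i+d\}$, and the same telescoping verification of constraint~(\ref{cspredence}) that makes every precedence pair tight on the overlap. The only cosmetic difference is that the paper sets $x_{1,1}=1$ for one distinguished Layer~1 job (and $1/d$ for the remaining $dm-d$ Layer~1 jobs), whereas you spread all Layer~1 jobs uniformly as $1/d$ on $\{1,\dots,d\}$; both choices satisfy~(\ref{csmachines}) and yield the identical load $m-\ell/d$ at $t=d+1+\ell$, so the slack claim and the resulting $3/2$ gap are unaffected.
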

\begin{proof}
Consider the following fractional solution: \begin{align*}
\text{[Layer 1]} \quad \quad &x_{1,1}=1 \quad \quad \quad \text{ and } \quad \quad \quad x_{j,t} = \frac{1}{d} \quad \quad\\ &\forall j\in \{2,3,\dots, dm-(d-1)\}, \forall t \in \{1,2,\dots,d\}\\
\text{[Layer 2]} \quad \quad &x_{dm-(d-1)+1,\ell+1} = x_{dm-(d-1)+2,\ell+2} = \dots =  x_{dm,\ell+d-1}  = \frac{1}{d} \quad \quad \quad \,\,\,\,\\& \forall \ell \in \{1,2,\dots, d\}\\
\text{[Layer 3]} \quad \quad &x_{j,t} = \frac{1}{d} \quad \quad \quad \forall j \in \{dm + 1,dm+2,\dots,2dm - (d-1)\}, \\& \forall t\in \{d+1,d+2,\dots,2d\}
\end{align*}
One can easily verify that each job $J$ is completely scheduled, i.e., $\sum_{t=1}^{2d}x_{J,t} = 1$. Moreover, the workload at each time step is at most $m$. To see this, we consider the following three types of time steps:\begin{enumerate}
\item For $t=1$, the workload is \begin{align*}
1 +\frac{1}{d} \times \left(dm-(d-1)-1\right) = m
\end{align*}
\item For $t=2,\dots,d$, the workload is \begin{align*}
\frac{1}{d} \times \left((dm - (d-1)-1) + t-1 \right) \leq m
\end{align*}
\item For $t = d+1,\dots,2d$, the workload is \begin{align*}
\frac{1}{d}\left((dm - (d-1)) + (2d - t) \right) = m - \frac{t - (d+1)}{d}
\end{align*}
\end{enumerate}
Note that in this feasible solution, we have that for $t = d + 1 + i$, $i\in \{1,2,\dots,d-1\}$, the machines can still execute a load of $\frac{i}{d}$. 

We have thus far verified that $\{x_{j,t}\}$ satisfies  the constraints (\ref{csmachines}) and (\ref{cscompletion}) of {\bf [LP]}. Hence it remains to check (\ref{cspredence}). Except for job $J_1$, any two jobs $J_k$ and $J_\ell$, such that $J_k$ is a $\emph{direct}$ predecessor of $J_\ell$, satisfy the following properties by construction: If $t_k = \min \left\{ t: x_{k,t}>0\right\}$, then \begin{enumerate}
\item $\max \left\{t: x_{k,t}>0 \right\} = t_k+d-1$.
\item $\min \left\{t: x_{\ell,t}>0 \right\} = t_k + 1$.
\item $\max \left\{t: x_{\ell,t}>0 \right\} = t_k + d $.
\end{enumerate}
Hence for any such jobs $J_k$ and $J_\ell$, and for any $t \in \{t_k,t_k+1,\dots,t_k+d\}$ we get \begin{align*}
\sum_{\tilde{t}=1}^{t-1} x_{k,\tilde{t}} + \sum_{\tilde{t} = t+1}^{2d} x_{\ell,\tilde{t}} & = \sum_{\tilde{t}=t_k}^{t-1} x_{k,\tilde{t}} + \sum_{\tilde{t} = t+1}^{t_k +d} x_{\ell,\tilde{t}} \\
& =\frac{t - 1 - t_k + 1}{d } + \frac{t_k + d  - (t+1) + 1}{d} \\
& = 1
\end{align*}
Similarly, for $t \in \{ 1,2,\dots,t_k-1\}$ (respectively $t\in \{t_k+d+1,\dots,2d\}$), the second (and respectively first) summation will be 1. 

On the other hand, one can see that we should schedule all the jobs in Layer 1 in order to start with the first job in Layer 2. Similarly, due to the \emph{chain-like} structure of Layer 2, it requires $d-1$ times steps to be scheduled, before any job in Layer 3 can start executing. Hence the makespan of any feasible schedule is at least \begin{align*}
\frac{dm -(d-1) }{m} + (d-1) + \frac{dm - (d-1)}{m} =  3d - \frac{2(d-1)}{m} - 1 \geq 3d -2.
\end{align*}

\end{proof}
\paragraph{Final Instance}
We now construct our final integrality gap instance $\mc{I}(k,d)$, using the basic building block $\mc{I}(d)$. This is basically done by \emph{replicating} the structure of $\mc{I}(d)$, and arguing that any feasible schedule for $\mc{I}(k,d)$ must have a makespan of roughly $2kd$, whereas we can \emph{extend} the the LP solution of Lemma~\ref{lemma:firstgap} for the instance $\mc{I}(d)$, to a feasible LP solution for $\mc{I}(k,d)$ of value $(k+1)d$. A key point that we use here is that the structure of the LP solution of Lemma~\ref{lemma:firstgap} enables us to schedule  a fraction of the \emph{chain} jobs of a layer, while executing the \emph{non-chain} jobs of the previous layer. We now proceed to prove that the integrality gap of {\bf [LP]} is 2, by constructing a family $\mc{I}(k,d)$ of scheduling instances, using the basic building block $\mc{I}(d)$.
\begin{theorem}
\label{thm:LPgap}
{\bf [LP]} has an integrality gap of 2.
\end{theorem}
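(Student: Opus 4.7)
The plan is to build $\mc{I}(k,d)$ by stitching $k$ copies of the basic block together, so that the leftover capacity at the end of each non-chain layer (quantified in Lemma~\ref{lemma:firstgap}) is used by the LP to overlap the next chain layer, while any integral schedule must still sequence all layers.

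Concretely, I would define $\mc{I}(k,d)$ to have $k+1$ \emph{non-chain layers} $N_0, \dots, N_k$, each containing $dm-(d-1)$ independent unit-time jobs, alternating with $k$ \emph{chain layers} $C_1, \dots, C_k$, each being a chain of $d-1$ unit-time jobs; precedence goes from every job of $N_{i-1}$ to the first job of $C_i$, and from the last job of $C_i$ to every job of $N_i$, on $m \gg dk$ machines. The intended LP solution places every job of $N_i$ at rate $1/d$ over time steps $\{id+1, \dots, (i+1)d\}$, and places the $j$-th job of $C_i$ at rate $1/d$ over the shifted window $\{(i-1)d + 1 + j, \dots, id + j\}$, giving total makespan $(k+1)d$. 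Verifying (\ref{csmachines}) at an interior time $t = id + \tau$ in $N_i$'s window ($1 \leq i \leq k-1$) gives load
$(m - (d-1)/d) + (d-\tau)/d + (\tau - 1)/d = m$
exactly, with strictly smaller load at the boundary layers $N_0$ and $N_k$. Verifying (\ref{cspredence}) uses the same shift-by-one telescoping identity as in Lemma~\ref{lemma:firstgap} for every adjacent predecessor/successor pair, since each such pair runs at rate $1/d$ on length-$d$ windows offset by exactly one time unit.

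For any integral schedule, precedence forces the first job of $C_i$ to wait until $N_{i-1}$ has been entirely processed, which takes at least $\lceil (dm-(d-1))/m \rceil = d$ time units (since $m \geq d-1$); then $C_i$ occupies $d-1$ sequential time units as a chain; then $N_i$ takes another $d$ units, and so on. Unrolling the sequence $N_0, C_1, N_1, \dots, C_k, N_k$ yields an integral makespan of at least $(k+1)d + k(d-1) = (2k+1)d - k$. Hence the integrality gap of {\bf [LP]} is at least
\begin{align*}
\frac{(2k+1)d - k}{(k+1)d} \;=\; 2 - \frac{1}{k+1} - \frac{k}{(k+1)d},
\end{align*}
which tends to $2$ as we first let $k \to \infty$ and then $d \to \infty$.

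The main obstacle is the bookkeeping needed to confirm (\ref{csmachines}) and (\ref{cspredence}) at all $(k+1)d$ time steps and across every boundary between adjacent layers. However, this reduces layer-by-layer to the same calculation already done in Lemma~\ref{lemma:firstgap}, once one observes that the $2d-2$ time steps spanned by $C_i$ fit exactly inside the "tail" of $N_{i-1}$'s window plus the "head" of $N_i$'s window, so that the slack of $\ell/d$ identified in the lemma is consumed \emph{exactly} by the fractional chain $C_i$. No fundamentally new analysis beyond that of the basic block is needed.
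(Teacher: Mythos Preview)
Your construction and LP solution are essentially identical to the paper's: the same alternation of $k+1$ ``big'' layers of $dm-(d-1)$ jobs with $k$ chains of length $d-1$, the same rate-$1/d$ fractional placement shifted by one unit per layer, and the same layer-by-layer reduction to Lemma~\ref{lemma:firstgap}. One minor inaccuracy: since preemption is allowed, each non-chain layer needs only $(dm-(d-1))/m$ time rather than $\lceil(dm-(d-1))/m\rceil=d$, so the integral lower bound is $(k+1)\bigl(d-\tfrac{d-1}{m}\bigr)+k(d-1)$ as in the paper; your assumption $m\gg dk$ makes the difference vanish and the ratio still tends to $2$.
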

\begin{proof}
Consider the following family of instances $\mathcal{I}(k,d)$ for constant integers $k$ and $d$, constructed as follows: \begin{itemize}
\item We have $k+1$ layers $\{\mc{L}^1_1, \mc{L}^1_2,\dots, \mc{L}^1_{k+1}\}$ similar to Layer 1 in $\mathcal{I}(d)$, and $k$ layers $\{\mc{L}^2_1, \mc{L}^2_2,\dots, \mc{L}^2_{k}\}$ similar to Layer 2. 
i.e., $\mc{L}^1_i$  has $dm-(d-1)$ jobs $J_{i,1}^1,\dots,J_{i,dm-(d-1)}^1$ for all $i \in \{1,\dots,k+1\}$ and $\mc{L}^2_i$  has $(d-1)$ jobs $J_{i,1}^2,\dots,J_{i,(d-1)}^2$ for all $i \in \{1,\dots,k+1\}$.
\item For $i \in \{1,2,\dots,k\}$: \begin{itemize}
\item Connect $\mc{L}_{i}^1$ to $\mc{L}_{i}^2$ in the same way that Layer 1 is connected to Layer 2 in $\mathcal{I}(d)$, that is, the job $J_{i,1}^2 \in \mc{L}_i^2$ is a successor for all the jobs in $\mc{L}_{i}^1$.
\item Connect $\mc{L}_{i}^2$ to $\mc{L}_{i+1}^1$ in the same way that Layer 2 is connected to Layer 3 in $\mathcal{I}(d)$, that is, all the jobs in $\mc{L}_{i+1}^1$ are successors for the job $J_{i,(d-1)}^2 \in \mc{L}_{i}^2$.
\end{itemize}
\end{itemize}
Notice that for $k=1$, the scheduling instance $\mc{I}(1,d)$ is the same as the previously defined instance $\mc{I}(d)$.
In any feasible schedule, we need to first schedule the jobs in $\mc{L}_1^1$, then those in $\mc{L}_1^2$, then $\mc{L}_2^1$, and so on, until $\mc{L}_{k+1}^1$. Hence the makespan of any such schedule is at least \begin{align*}
(k+1) \frac{dm-(d-1)}{m} + k (d-1) > 2kd + d - k -1.
\end{align*}
We now show that {\bf [LP]} has a feasible solution of value $(k+1)d$. Let $\{x_{j,t}\}$ for $t = 1,\dots,2d$ and $ j = 1,\dots,2dm-(d-1)$ be the feasible solution of value $2d$ obtained in Lemma~\ref{lemma:firstgap}. It would be easier to think of $\{x_{j,t}\}$ as $\{x^1_{j,t}\}\cup\{x^2_{j,t}\}\cup\{x^3_{j,t}\}$  where for $i=1,2,3$,  $\{x^1_{j,t}\}$ is the set of LP variables corresponding to variables in Layer $i$ in $\mc{I}(d)$. We now construct a feasible solution $\{y_{j,t}\}$ for $\mc{I}(k,d)$. We similarly think of $\{y_{j,t}\}$ as $\{y_{j,t}^1\} \cup \{y_{j,t}^2\}$, where $y_{j,t}^i$ is the set of LP variables corresponding to jobs in $\mc{L}^i_\ell$, for some $1\leq \ell \leq k+1$. The set $\{y^i_{j,t}\}$ for $\mc{I}(k,d)$ can then be readily constructed as follows: \begin{itemize}
\item for $J_{1,j}^1 \in \mc{L}_1^1$, $y^1_{j,t} = x^{1}_{j,t}$ for $t \leq 2d$, and 0 otherwise.
\item for $J_{i,j}^2 \in \mc{L}_i^2$, $y^2_{j,t +(i-1)d } = x^{2}_{j,t}$ for $i = 1,2,\dots,k$, and $t \leq 2d$, and 0 otherwise.
\item for $J_{i,j}^1 \in \mc{L}_i^1$, $y^1_{j,t +(i-1)d } = x^{3}_{j,t}$ for $i = 2,3,\dots,k+1$, and $t \leq 2d$, and 0 otherwise.
\end{itemize}
Using Lemma~\ref{lemma:firstgap}, we get that for $t = d + 1 + i$, $i\in \{1,2,\dots,d-1\}$, the machines in the feasible LP solution can still execute a load of $\frac{i}{d}$, and hence invoking the same analysis of Lemma~\ref{lemma:firstgap} with the aforementioned observation for every two consecutive layers  of jobs, we get that $\{y_{j,t}\}$ is a feasible solution for {\bf [LP]} of value $(k+1)d$.
\end{proof}

\begin{figure}[h!]
  \centering
    \includegraphics[width=1\textwidth]{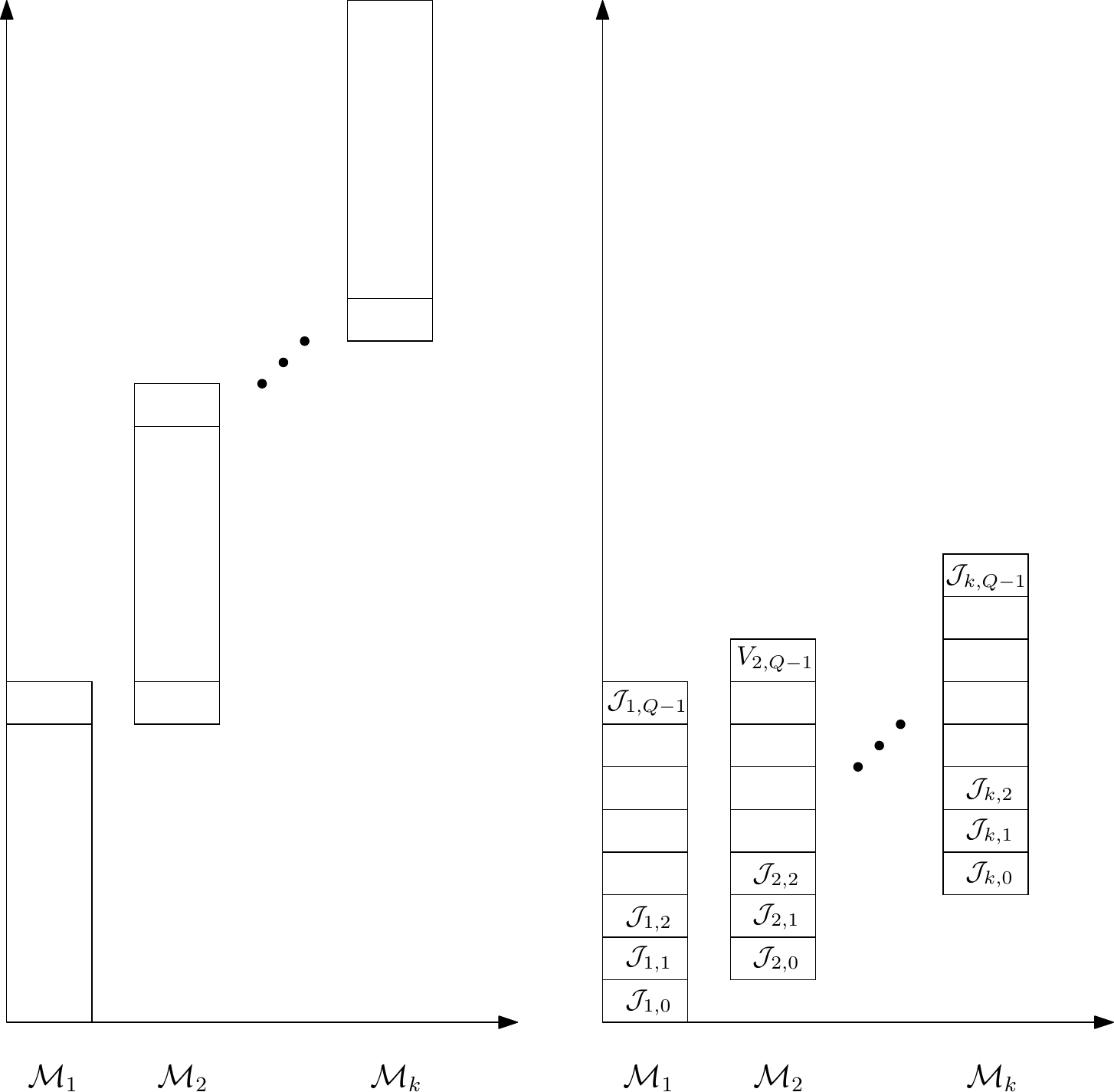}
      \caption{Structure of the Soundness Versus Completeness of \Qprec Assuming Hypothesis~\ref{hyp:kpartite}.\emph{The schedule on the left corresponds to the case where the graph represents the NO Case of Hypothesis~\ref{hyp:kpartite}; note that most of the machines are idle but for a small fraction of times. The schedule on the right corresponds to the case where the graph represents the YES Case; the schedule is almost packed. This case also illustrates the ordering of the jobs within each machine according to the partitioning of the jobs in the $k$-partite graph.}}
      \label{fig:difspeed}
\end{figure}

\begin{figure}[h!]
  \centering
    \includegraphics[width=1\textwidth]{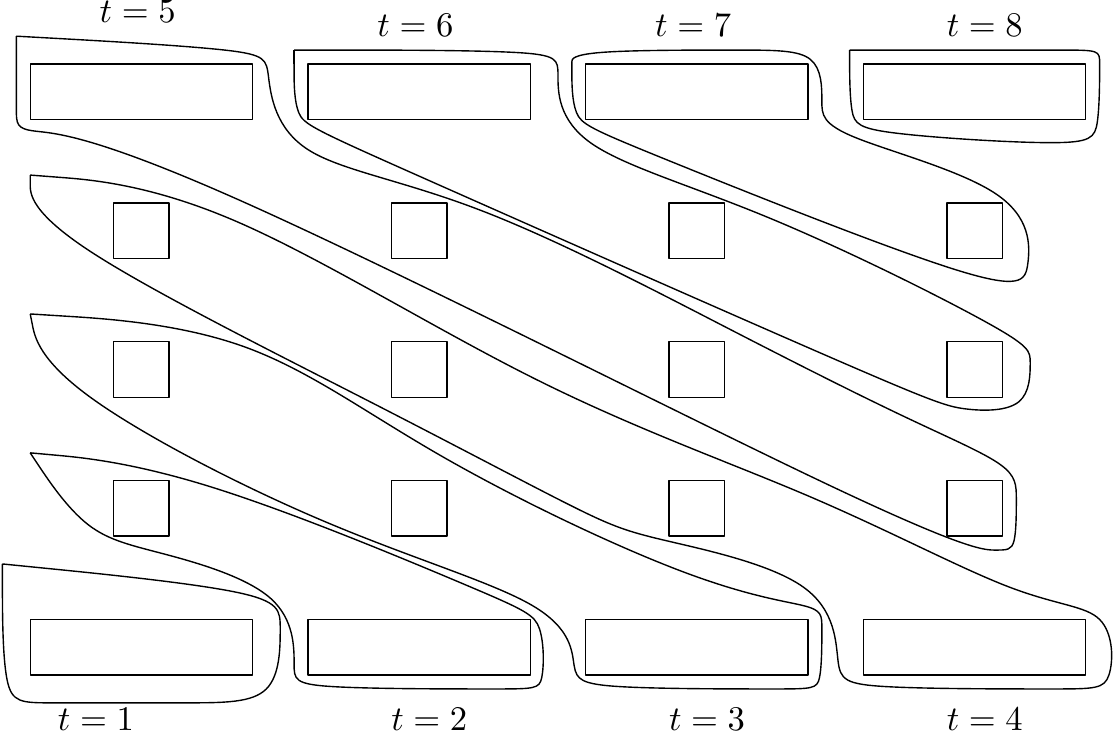}
     \caption{Example of the construction of sets $\{\mc{S}\}$ for \Pprecpmtn in the YES Case of Hypothesis~\ref{hyp:kpartite}, for $k=3$ and $Q=4$, along with their respective finishing time in the defined schedule. \emph{The boxes in the figure represent sets of jobs, and the sets that are grouped together have no precedence constraints within each others. Hence a feasible schedule is to schedule each group during the same time step. These groups corresponds to the sets $\mc{T}_i$ of Appendix~\ref{sec:appkprte}.}}
     \label{fig:UGhard}
\end{figure}

\begin{figure}[h!]
  \centering
    \includegraphics[width=1\textwidth]{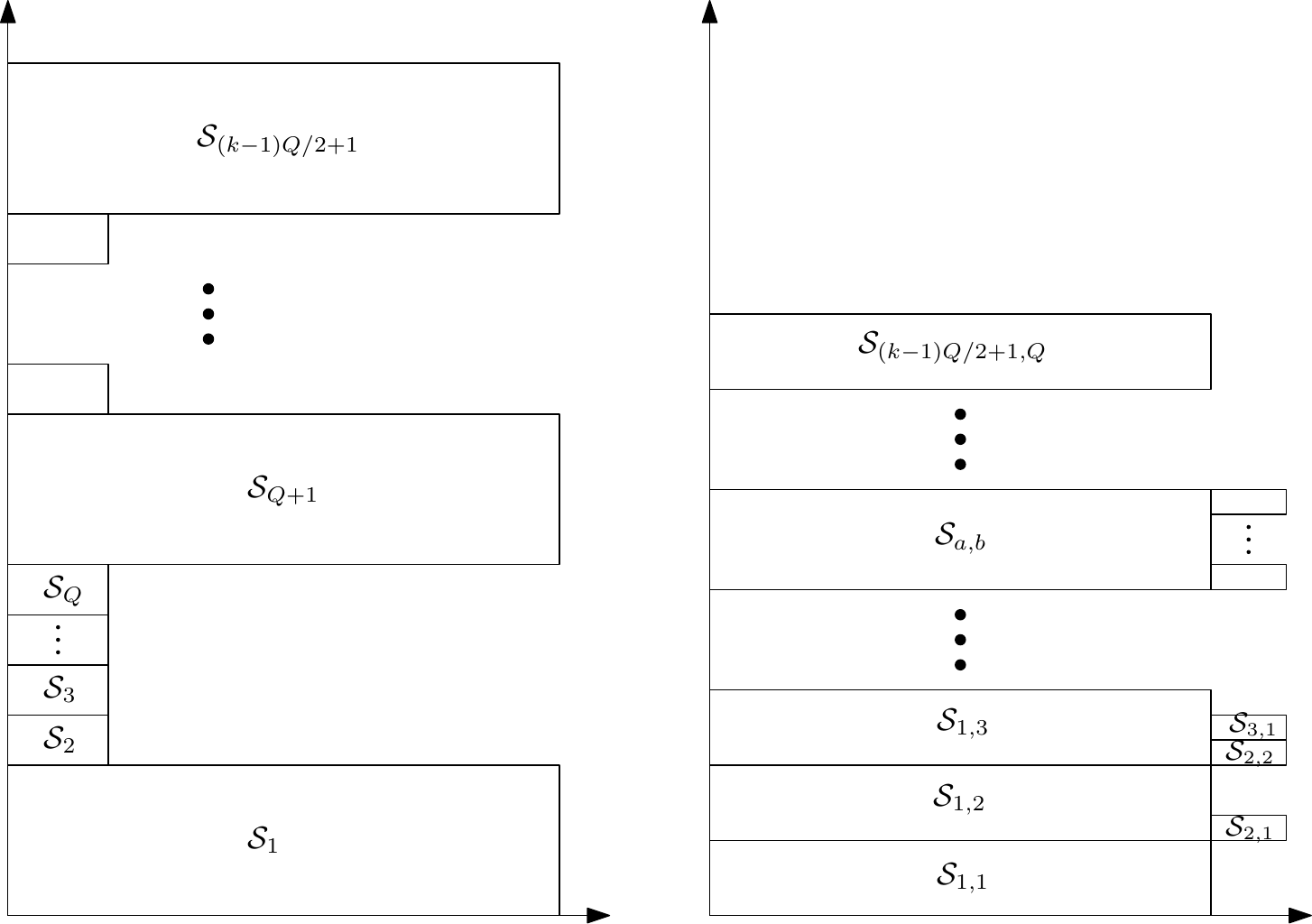}
      \caption{Structure of Soundness versus Completeness of \Pprecpmtn Assuming Hypothesis~\ref{hyp:kpartite}: \emph{The schedule on the left corresponds to the case where the starting graph represents the NO Case of Hypothesis~\ref{hyp:kpartite}; note the most of the machines are idle most of the time in this case. The schedule on the right corresponds to the case where the starting graph represents the YES Case of the hypothesis; Note that the all the machines are packed almost all the time. This case also illustrates our partitioning of the jobs in sets $\{\mc{T}_t\}$, where $\mc{T}_t = \bigcup_{i,j: i+j  - 1 = t } \mc{S}_{i,j}$ }}
  \label{fig:kpartitepmtn}
\end{figure}

\end{document}